\newtheorem{claim}{Claim}[section]
\newtheorem{remark}[theorem]{Remark}
 \def\RR{{\mathbb R}}
 \def\P{{\mathcal P}}
 \def\aP{{\mathfrak P}}
 \def\PP{{\mathfrak P}}
 \def\R{{\mathcal R}}
 \def\S{{\mathcal S}}
 \def\aS{{\mathfrak S}}
\def\Q{{\mathcal Q}}
 \def\ZZ{{\mathbb Z}}
 \def\F{{\mathcal F}}
 \def\H{{\mathcal H}}
\def\r{\ensuremath{r}}
 \newcommand{\dunion}{\dot{\cup}}
\newcommand{\ratio}{\ensuremath{6}}
\newcommand{\defeq}{\stackrel{\textup{def}}{=}}
\title{Approximating Minimum Cost Connectivity Orientation and
  Augmentation}
\author{
Mohit Singh\thanks{
       {H. Milton Stewart School of Industrial and Systems Engineering,}
       {Georgia Institute of Technology, Atlanta, USA.}
       {(mohitsinghr@gmail.com)}}
 \and
L\'aszl\'o A.\ V\'egh\thanks{
       {Dept.\ of Mathematics,}
       {London School of Economics, London, UK.}
      {(l.vegh@lse.ac.uk)}
}}
\begin{document}

\maketitle

\begin{abstract}
We investigate problems addressing combined connectivity augmentation
and orientations settings.
We give a polynomial-time 6-approximation algorithm for finding a
minimum cost subgraph of an undirected graph $G$ that admits an orientation
covering a nonnegative crossing $G$-supermodular demand function, as
defined by Frank \cite{frank80}. An important example is
  $(k,\ell)$-edge-connectivity, a common generalization of global and
  rooted edge-connectivity.

Our algorithm is based on a non-standard application of the iterative
rounding method. We observe that the  standard linear program with cut
constraints is not amenable and use an
alternative linear program with partition and co-partition
constraints instead. The proof requires a new type of
uncrossing technique on partitions and co-partitions.

We also consider the problem setting when the cost of an
edge can be different for the two possible orientations.
The problem becomes substantially more difficult already for the simpler
requirement of $k$-edge-connectivity.
Khanna, Naor, and Shepherd~\cite{Khanna05}  showed that the
integrality gap of the natural linear program is at most $4$ when $k=1$ and
conjectured that it is constant for all fixed $k$. We disprove this
conjecture by showing an $\Omega(|V|)$ integrality gap even when $k=2$.
\end{abstract}

\begin{keywords}Graph Algorithms, Approximation Algoritms, Graph Connectivity, Graph Orientation\end{keywords}

\begin{AMS}68Q25 ,68R10, 05C85,68W25.  \end{AMS}

\pagestyle{myheadings}
\thispagestyle{plain}

\section{Introduction}

\footnote{A preliminary version of the paper appeared in Proceedings of the Twenty-Fifth Annual ACM-SIAM Symposium on Discrete Algorithms, SODA 2014.}
Connectivity augmentation and orientation are two fundamental classes of
problems for graph connectivity. In connectivity augmentation,
we wish to add a minimum cost set of new edges to a graph to satisfy
certain connectivity requirements, for example, $k$-edge-connectivity.
This problem can be solved by a minimum cost spanning tree algorithm
for $k=1$, however, becomes NP-complete for every fixed value $k\ge
2$. There is a vast and expanding literature on approximation
algorithms for various connectivity requirements; a landmark result is
due to Jain \cite{jain}, giving a 2-approximation algorithm for
survivable network design, a general class of edge-connectivity
requirements. For a survey on such problems, see \cite{kortsarz07,Gupta11}.

Despite the NP-completeness for general costs,  the special case of
minimum cardinality augmentation (that is, every edge on the node set
has equal cost) turned out to be polynomially tractable in several
cases and gives rise to a surprisingly rich theory. For minimum
cardinality $k$-edge-connectivity augmentation, an exact solution can
be found in polynomial time (Watanabe and Nakamura \cite{watanabe}, Frank \cite{frank-edge}). We refer the reader to the recent book by Frank \cite[Chapter 11]{frankbook} on results and techniques of minimum cardinality connectivity augmentation problems.

In connectivity orientation problems, the input is an undirected graph, and one is interested in the existence of an orientation satisfying certain connectivity requirements. For $k$-edge-connected orientations, the classical result of  Nash-Williams \cite{nash-williams-orient} gives a necessary and sufficient condition and a polynomial-time algorithm for finding such an orientation, whereas for rooted $k$-edge-connectivity, the corresponding result is due to Tutte \cite{tutte}.
A natural common generalization of these two connectivity notions is $(k,\ell)$-edge-connectivity: for integers $k\ge \ell$, a directed graph $D=(V,A)$ is said to be $(k,\ell)$-edge-connected from a root node $r_0\in V$ if for every $v\in V-r_0$ there exists $k$-edge-disjoint directed paths from $r_0$ to $v$, and $\ell$ edge-disjoint directed paths from $v$ to $r_0$. The case $\ell=k$ is equivalent to $k$-edge-connectivity whereas $\ell=0$ to rooted $k$-connectivity.
A good characterization of $(k,\ell)$-edge-connected orientability was given by Frank \cite{frank80}, see Theorem~\ref{thm:orient}. Submodular flows can be used to find such orientations (\cite{frank96}, see also \cite[Chapters 9,16]{frankbook}). The submodular flow technique also enables to solve minimum cost versions of the problem, when the two possible orientations of an edge can have different costs.

\medskip

Hence in a combined connectivity augmentation and orientation question
one wishes to find a minimum cost subgraph of a given graph that admits an orientation with a prescribed connectivity property. The simplest question is $k$-edge-connected orientability; however, this can be reduced to a pure augmentation problem. According to Nash-Williams's theorem \cite{nash-williams-orient}, a graph has a $k$-edge-connected orientation if and only if it is $2k$-edge-connected. The problem of finding the minimum cost $2k$-edge connected subgraph can be approximated within a factor of $2$~\cite{KhullerV94}. Another interesting case is to require rooted $k$-edge-connected orientability. Khanna, Naor, and Shepherd~\cite{Khanna05} give a polynomial-time algorithm for this problem and more generally, show that the problem is polynomial-time solvable if the connectivity requirements are given by a \emph{positively intersecting supermodular} function.

In this paper we consider the more general requirement of $(k,\ell)$-edge-connectivity, formally, defined as follows. Let $V\choose 2$ denote the edge set of the complete undirected graph on node set $V$.
We are given
an undirected graph $G=(V,E)$ and a cost function $c:{V\choose 2}\setminus E\rightarrow \RR_+$. The goal is to find a minimum cost subgraph $F\subseteq {V\choose 2}\setminus E$ such that $(V,E\cup F)$ admits a $(k,\ell)$-edge-connected orientation. For notational convenience, we denote by $E^*$ the set of edges that can be added to $G$ (instead of ${V\choose 2}\setminus E$). A formal description of the problem appears in the figure below. We present a $\ratio$-approximation algorithm for this problem.

\begin{center}
\vspace{.3cm}
\fbox{
\parbox{0.85\linewidth}{
\smallskip
\noindent
 Minimum Cost $(k,\ell)$-Edge-Connectivity Orientation Problem
\vspace{1mm}

\begin{tabularx}{\linewidth}{lX}
\textit{Input:}&
Undirected graph $G=(V,E)$, an edge set $E^*\subseteq {V\choose 2}$
with a
cost function
$c:E^*\rightarrow \RR_+$, 
nonnegative integers $k\geq \ell$, and root $r_0$.\\
\textit{Find:}&
 Minimum cost set of edges $F\subseteq E^*$ such that
$(V,E\cup F)$ has a $(k,\ell)$-edge connected orientation.
\end{tabularx}
}}
\vspace{.3cm}
\end{center}
\medskip

For this problem, Frank and Kir\'aly \cite{frankkiraly} gave a polynomial-time algorithm for finding the exact solution in the \emph{minimum cardinality} setting; their result employs the toolbox of splitting off techniques and supermodular polyhedral methods, used for minimum cardinality augmentation problems. They also address related questions of degree-specified augmentations and orientations of mixed graphs.

As opposed to the polynomially solvable setting of Frank and Kir\'aly \cite{frankkiraly}, our problem is NP-complete, and therefore fundamentally different techniques are needed. Our algorithm is based on iterative rounding, a powerful technique introduced by Jain \cite{jain} for survivable network design; see the recent book~\cite{LRSbook} for other results using the technique. The standard way to apply the technique is to {\em(a)} formulate a linear programming relaxation for the problem, {\em(b)} use the \emph{uncrossing technique} to show that any basic feasible solution can be characterized by a ``simple'' family of tight constraints, {\em(c)} use a counting argument to show that there is always a variable with large fractional value in any basic feasible solution. The algorithm selects a variable with large fractional value in a basic optimal LP solution and includes it in the graph. The same argument is then applied iteratively to the residual problem.

While we also use the framework of iterative rounding, our application requires a number of new techniques and ideas. Firstly, the standard cut relaxation for the problem, \eqref{lp:2}, is not amenable for the iterative rounding framework. We exhibit basic feasible solutions with no variable having a large fractional value. Instead, we use the characterization given by Frank~\cite{frank80} of undirected graphs that admit a $(k,\ell)$-edge connected orientation. This yields a different linear programming formulation~\eqref{lp}, which has constraints for partitions as opposed to cut constraints. However, the standard uncrossing technique for cuts is no longer applicable. One of the main contributions of our result is to extend the uncrossing technique to partition constraints and show that any basic feasible solution is characterized by partition constraints forming an appropriately defined tree structure (Theorem~\ref{thm:charac}).
Provided the appropriate set of constraints, the existence of an edge with high fractional value (Theorem~\ref{thm:frac-value}) is proved via the token argument originating from Jain \cite{jain}. Again, dealing with partitions requires a substantially more intricate argument; already identifying a tree structure on the tight partitions is nontrivial.

Thus we show that every basic feasible solution $x^*$ must contain an edge $e\in E^*$ such that $x^*_e\ge 1/\ratio$. We add all such edges to $F$, and iterate until we obtain a graph admitting a $(k,\ell)$-edge-connected orientation.  Our results are valid for the more general, abstract problem setting of covering nonnegative crossing $G$-supermodular functions, introduced by Frank \cite{frank80}. This extension involves extending our uncrossing techniques to collection of partitions and \emph{co-partitions} since the linear programming formulation has constraints for both partitions and co-partitions. This introduces slightly more technical challenges but gives an unified and general framework to present our results.

So far we considered symmetric orientation costs and our aim was to identify a minimum cost augmentation having a certain orientation. In a more general setting, one might differentiate between the cost of the two possible orientations in this setting. For the original orientation problem without augmentation, finding a minimum cost $(k,\ell)$-edge-connected orientation reduces to submodular flows  \cite{frank96}. The problem becomes substantially more difficult when combined with augmentation, even when the starting graph is empty, and the connectivity requirement is $k$-edge-connectivity. This problem was studied by Khanna, Naor, and Shepherd \cite{Khanna05}, posed in the following equivalent way: find a minimum cost $k$-edge-connected subgraph of a directed graph with the additional restriction that for any pair of nodes $u$ and $v$, at most one of the arcs $(u,v)$ and $(v,u)$ can be used. They gave a 4-approximation for $k=1$; there is no constant factor approximation known for any larger value of $k$. In Section~\ref{sec:mixed}, we study this problem, and exhibit an example showing that the integrality gap of the natural LP relaxation is at least $\Omega(|V|)$ even for $k=2$.

Combined orientation and augmentation settings were also studied by Cygan, Kortsarz and Nutov \cite{cygan12}, who gave a $4$-approximation algorithm for finding a minimum cost subgraph that admits a Steiner Forest Orientation, that is, an orientation containing a directed path between
a collection of ordered node pairs.

\subsection{Formal Statement of Results}

Our first main result is the following theorem which gives an approximation algorithm for the Minimum Cost $(k,\ell)$-Edge Connectivity Orientation Problem.

\begin{theorem}\label{thm:main1}
There exists a polynomial-time \ratio{}-approximation algorithm for the Minimum Cost $(k,\ell)$-Edge Connectivity Orientation Problem.
\end{theorem}

This result is obtained as a special case of a more general theorem. The general result is on covering crossing $G$-supermodular functions, introduced by Frank
\cite{frank80,frank96}. We now introduce this general framework,  starting with  notation and a few technical definitions.

For a directed graph $D=(V,A)$ and a subset $S\subseteq V$ of nodes, let $\bar S\defeq V\setminus S$ denote the complement of $S$.
For any subsets $B\subseteq A$ and $S\subseteq V$, we let
$\delta^{out}_B(S)\defeq\{(u,v)\in B: u\in S, v\notin S\}$ denote the set
of edges in $B$ which have their tail in $S$ and head outside of
$S$. We define $\delta^{in}_B(S)\defeq \delta^{out}_B(\bar S)$. We let $d_A^{out}(S)\defeq|\delta_A^{out}(S)|$ and $d_A^{in}(S)\defeq|\delta_A^{in}(S)|$.
 For an undirected  graph $G=(V,E)$ and subsets $F\subseteq E$,
$S\subseteq V$, we denote by
$\delta_F(S)$ the set of edges in $F$ with exactly one endpoint in
$S$.
We let $d_F(S)\defeq |\delta_F(S)|$, and for two
 subsets $S, T\subseteq V$, we let $d_F(S,T)$ to denote the set of
 edges in $F$ with one endpoint in  $S\setminus T$ and other in $T\setminus S$.
 For the graph $G=(V,E)$,
we shall also use $d_G(S,T)\defeq d_E(S,T)$.

The subsets $S,T\subseteq V$ are called \emph{crossing} if all four sets $S\cap T,
S\setminus T, T\setminus S$, and $V\setminus (S\cup T)$ are non-empty.
A function $f:2^V\rightarrow \ZZ_+$ is called \emph{crossing supermodular} if for all $S,T\subset V$ which are crossing we have
$$f(S)+f(T)\leq f(S\cap T) +f (S\cup T).$$
Assume we are also given an undirected graph $G=(V,E)$ on the
ground set $V$.
A function $f:2^V\rightarrow \ZZ_+$ is called \emph{crossing $G$-supermodular} if for all $S,T\subset V$ which are crossing we have
$$f(S)+f(T)\leq f(S\cap T) +f (S\cup T)+d_G(S,T).$$
Note that if a function is crossing supermodular, then it is also
crossing $G$-supermodular for any graph $G$ on the same ground set $V$.

A directed graph $D=(V,A)$ is said to \emph{cover} the function $f:2^{V}\rightarrow \ZZ_+$,  if for all subsets $S\subseteq V$,
 $$d^{in}_A(S)\geq f(S).$$
 We say that the undirected graph $(V,H)$ is \emph{$f$-orientable}, if there exists an orientation $A$ of the edges in $H$ such that $(V,A)$
 covers $f$. 

Let us now formulate the minimum-cost $(k,\ell)$-edge-connectivity
orientation problem in this framework. Given integers $k\ge \ell\ge 0$
and root  node $r_0\in V$, we let
 \begin{equation}
 f(S)\defeq\begin{cases}
k,\quad \mbox{if }r_0\notin S,\ S\neq\emptyset;\\
\ell,\quad \mbox{if }r_0\in S,\ S\neq V;\\
0,\quad\mbox{if }S=\emptyset \mbox{ or }S=V.
 \end{cases}\label{def:kl}
 \end{equation}
By Menger's theorem, a digraph covers $f$ if and only if it is
$(k,\ell)$-edge-connected from the root $r_0$. Hence, an edge set
admits a $(k,\ell)$-edge-connected orientation if and only if it is
$f$-orientable.

This function $f$ is clearly nonnegative and crossing supermodular. Indeed, let $S$ and $T$ be crossing sets. If $r_0\notin S$ and $r_0\notin T$ then $r_0$ is not in $S\cap T$ and $S\cup T$ and function takes the value of $k$ for all these sets. Similarly, if $r_0$ is in both $S$ and $T$, then $r_0\in S\cap T$ and $r_0\in S\cup T$ and thus it takes the value of $l$ on all these sets. In the remaining case, if $r_0$ is in exactly one of $S$ and $T$, then $r_0\in S\cup T$ and $r_0\notin S\cap T$ and thus the equality still holds.

\medskip

We formulate the central problem and main result of our paper.

\begin{center}
\fbox{
\parbox{0.85\linewidth}{
\smallskip

\noindent
 Minimum Cost $f$-Orientable Subgraph Problem
\vspace{1mm}

\begin{tabularx}{\linewidth}{lX}
\textit{Input:}&
Undirected graph $G=(V,E)$, an edge set $E^*\subseteq {V\choose 2}$
with a
cost function
$c:E^*\rightarrow \RR_+$,  a nonnegative valued crossing
$G$-supermodular function $f:2^{V}\rightarrow \ZZ_+$. \\
\textit{Find:}&
 Minimum cost set of edges $F\subseteq E^*$ such that
$(V,E\cup F)$ is $f$-orientable.
\end{tabularx}
}}
\end{center}

\medskip

For the function $f$, we assume that an oracle is given that returns
$f(S)$ for any $S\subseteq V$ in polynomial time. This assumption
will hold for the rest of the paper.

\begin{theorem}\label{thm:main}
There exists a polynomial-time \ratio{}-approximation algorithm for the Minimum Cost $f$-Orientable Subgraph Problem.
\end{theorem}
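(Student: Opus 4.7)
The plan is to apply Jain's iterative rounding framework, but with two substantial departures dictated by the fact that the standard cut LP fails: I would work with a partition/co-partition LP derived from Frank's orientability theorem, and I would extend uncrossing and token-counting from laminar families of cuts to tree-like families of partitions.

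First, I would set up the LP. Frank's orientability theorem says that $(V,E\cup F)$ is $f$-orientable iff for every partition $\P$ of $V$ we have $e_{E\cup F}(\P)\ge \sum_{S\in\P}f(S)-\text{(correction terms)}$ and similarly for every co-partition. Rewriting as a covering LP in the variables $x_e$ for $e\in E^*\setminus E$, one obtains a relaxation \eqref{lp} of the form
\[
\min\Bigl\{\sum_{e\in E^*\setminus E} c_e x_e : x(\chi_{E^*\setminus E}(\P))\ge b(\P)\ \forall\ \text{partition/co-partition }\P,\ 0\le x_e\le 1\Bigr\},
\]
where $b(\P)$ is the appropriate right-hand side derived from $f$ and the existing edges $E$. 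The rationale for this choice over the cut LP is the one flagged in the introduction: the cut LP admits basic solutions in which every coordinate is tiny, so no progress can be made by rounding, whereas the partition LP has just enough structure to force a large fractional coordinate.

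Next, I would prove the structural characterization (Theorem~\ref{thm:charac}) of basic feasible solutions $x^*$ of \eqref{lp}. The classical uncrossing trick for cut constraints does not apply directly, so the main task is to develop a substitute for partitions and co-partitions. The idea I would pursue is: if two tight partitions $\P_1,\P_2$ ``cross'' in a suitable sense, replace them with a refinement and a coarsening obtained by intersecting parts, arguing supermodularly from crossing $G$-supermodularity of $f$ that the replacements are also tight and that their characteristic vectors span the original pair in the constraint matrix. Iterating this, one should be able to select a maximal linearly independent tight family in which the underlying partitions are laminar in a precise sense, yielding a tree structure on the parts; the co-partition constraints, being complementary, can be absorbed into the same tree via their complement partitions. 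The hard part is formulating the right notion of crossing and refinement for partitions (not just pairs of sets) so that (i) uncrossing preserves tightness, (ii) the span of characteristic vectors is preserved, and (iii) the end result really is tree-structured; this is the technical crux of the whole paper.

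Third, with the tree structure in hand, I would run a token-counting argument à la Jain to prove Theorem~\ref{thm:frac-value}, namely that some $x^*_e\ge 1/6$. Assume for contradiction that every $x^*_e<1/6$. Distribute, say, $6$ tokens to each fractional edge $e$ and redistribute them to the nodes of the partition tree, in such a way that every internal node receives enough tokens to pay for its own tight constraint, and at least one token is left over; counting the total tokens would then contradict the equality between the number of fractional edges and the number of tight constraints guaranteed by basicness. The constant $6$ will emerge from this counting; its appearance reflects that a partition constraint can ``see'' each edge of $\chi(\P)$ on either side, and that co-partition constraints interact with partition constraints. I expect the bookkeeping (splitting cases according to whether a node in the partition tree comes from a partition or a co-partition, and according to the arity and sibling structure of the tree) to be delicate but mechanical once the tree is correctly set up.

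Finally, I would assemble the algorithm and its analysis. Starting from $G=(V,E)$, solve \eqref{lp}, pick every edge with $x^*_e\ge 1/6$, add it to $F$ and to $G$, and replace $f$ by the residual requirement $f'(S):=f(S)-d^{in}_{A_F}(S)$ for some partial orientation $A_F$ of the chosen edges—or, more cleanly, by the residual demand derived from Frank's characterization, which remains nonnegative and crossing $G'$-supermodular with respect to the updated graph $G'$ (this is the content of Claim~\ref{cl:G-sup} combined with standard residual arguments). Since every iteration fixes at least one edge, the procedure terminates in polynomial time, and the cost paid is at most $6$ times the optimum LP value, hence at most $6\cdot\OPT$, proving Theorem~\ref{thm:main}.
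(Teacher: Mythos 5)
Your plan follows the paper's route: the same partition/co-partition relaxation \eqref{lp} derived from Frank's theorem, an uncrossing argument for tight partition constraints leading to a tree-structured independent family (Theorem~\ref{thm:charac}), a token count giving an edge with $x^*_e\ge 1/6$ (Theorem~\ref{thm:frac-value}), and iteration. Two points deserve attention. First, your proposed update of the demand to $f'(S):=f(S)-d^{in}_{A_F}(S)$ for a partial orientation $A_F$ of the chosen edges is exactly the step that would fail: such a residual can become negative, and for demand functions with negative values Frank's partition/co-partition characterization no longer holds (tree-compositions are needed), which is precisely the source of the $\Omega(|V|)$ integrality gap in Section~\ref{sec:mixed}. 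The paper avoids this entirely: $f$ is never changed; only $G$ grows, $f$ remains crossing $G'$-supermodular for any $G'\supseteq G$, and the right-hand sides $\sum_{S\in\P}f(S)-e_{G'}(\P)$ shrink automatically. Your parenthetical ``more cleanly'' alternative is the correct one and should be the only one.

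Second, the structural crux you flag but do not execute is where the real content lies, and ``laminar in a precise sense'' undersells what is needed. Cross-freeness of the union $\P\cup\Q$ is \emph{not} sufficient for the token argument; the paper introduces the strictly stronger notion of a \emph{strongly cross-free} family, requiring that the associated subpartitions $\tilde\P,\tilde\Q$ be disjoint or that one dominate (resp.\ strongly dominate, in the mixed partition/co-partition case) the other. Correspondingly there are two distinct uncrossing operations: the meet/join $\P\wedge\Q,\P\vee\Q$ for two partitions (or two co-partitions), and a multi-member decomposition $\Upsilon(\P,\Q)$ when one is a partition and the other a co-partition; establishing that these preserve tightness, preserve the span, and strictly decrease suitable potentials ($\nu$ and $\mu$) is the bulk of Lemma~\ref{lem:weak-uncross} and the proof of Theorem~\ref{thm:charac}. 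Finally, a minor bookkeeping difference: the paper assigns $3$ tokens per edge (one per endpoint plus one for the pair $\{u,v\}$, each routed to the $\preceq$-minimal properly containing member), and the constant $6$ arises from $x^*(\chi(\P)\Delta\chi(\Q))$ being a positive integer while each $x^*_e<1/6$, forcing $|\chi(\P)\Delta\chi(\Q)|\ge 7$; your ``$6$ tokens per edge'' scheme would need its own verification but is not obviously wrong.
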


Theorem~\ref{thm:main} is proved in Section~\ref{sec:f-orient}. Theorem~\ref{thm:main1} follows as a corollary.

\medskip

We next consider the {\em Asymmetric Augmentation with Orientation
  Constraints Problem}, introduced by Khanna, Naor, and
Shepherd~\cite{Khanna05}.
Here we are given an undirected graph $G=(V,E)$, an integer $k$, and
costs $c_{uv}$ and $c_{vu}$ for the two possible orientations of an edge $(u,v)\in E$. The goal is
find subgraph $F\subseteq E$ such that there exists an orientation $A$
of $F$ which is $k$-edge connected. Observe that we are allowed to pick any edge at most once in $F$ and thus we can use at most one of the orientations. The objective to minimize is
$\sum_{(u,v)\in A} c_{uv}$. While as in the previous problem, an edge is allowed to be oriented in one direction, it differs in the
fact that cost of an edge depends on its orientation.
For $k=1$, Khanna, Naor, and Shepherd~\cite{Khanna05} show that the
integrality gap is upper bounded by $4$. They also conjectured that
the integrality gap is constant for all
$k$. We refute this conjecture by showing that the integrality gap can
be $\Omega(|V|)$ already for $k=2$.
The integrality gap will be given for a special case of this problem,
called  \emph{Augmenting a mixed graph with orientation constraints}, described below.

\begin{center}
\vspace{.5cm}
\fbox{
\parbox{0.85\linewidth}{
\smallskip

\noindent
Augmenting a Mixed Graph with Orientation Constraints
\vspace{1mm}

\begin{tabularx}{\linewidth}{lX}
\textit{Input:}&
Mixed graph $G=(V,A\cup E)$, with $A$ being a set of directed and
$E$ a set of undirected edges,  a further set of undirected edges
                 $E^*$
with a cost function $c:E^*\rightarrow \RR_+$, and an integer $k$.\\
\textit{Find:}&
 Minimum cost set of edges $F\subseteq E^*$ such that
 $E\cup F$ admits an orientation $H$ for which
$(V,A\cup H)$ is $k$-edge-connected.
\end{tabularx}
}}
\vspace{.5cm}
\end{center}

This problem corresponds to the special case of the
{\em Asymmetric Augmentation with Orientation
  Constraints Problem}
when there are two types of edges $(u,v)$. Either the cost is
symmetric, i.e., $c_{uv}=c_{vu}$ (i.e. the edges in $E^*\cup E$) or,
 one of $c_{uv}$ is 0
and the other is $\infty$ (i.e. the edges in $A$).
{\em Augmenting a Mixed Graph with Orientation Constraints} seems to be a mild extension only of  the $f$-orientable
subgraph problem; moreover, we have the simpler requirement of
$k$-edge-connectivity. However, the  mixed graph setting
leads to a substantially more difficult setting already in this
simplest case.

\begin{theorem}\label{thm:orientation}
For any $k\geq 2$, there is an instance of the
{\em Augmenting a mixed graph with orientation constraints} such that the integrality gap of the natural linear programming formulation is $\Omega(|V|)$.
\end{theorem}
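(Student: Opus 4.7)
The plan is to construct an explicit family $\{I_n\}$ of instances of Augmenting a Mixed Graph with Orientation Constraints with $k=2$ and $|V|=\Theta(n)$ on which the natural LP attains value $O(1)$ while every integer feasible solution costs $\Omega(n)$. The gap is driven by two relaxations present in the natural LP but violated by any integer orientation: first, each undirected candidate edge $\{u,v\}$ contributes orientation variables coupled only through $y_{uv}+y_{vu}\le 1$, so fractionally a single edge can aid both the in- and the out-direction of every cut it crosses simultaneously; second, the LP contains only the one-sided cut inequalities $d^{in}_A(S)+d_E(S)+x(\delta(S))\ge k$ and the symmetric out-version, whereas Frank's theorem on mixed orientability additionally requires the summed condition $d^{in}_A(S)+d^{out}_A(S)+d_{E\cup F}(S)\ge 2k$ at every cut, which is enforced only under integer assignments.

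For the construction I would take $V=\{a,b\}\cup\{v_1,\ldots,v_n\}$ and fix the directed edges $A$ so that (i) every cut $S$ not separating $a$ from $b$ is already $k$-edge-connected from $A$ alone, and (ii) every cut $S$ separating $a$ from $b$ has $d^{in}_A(S)=d^{out}_A(S)=1$, so that one extra in-edge and one extra out-edge are required across each such cut. The set $E^*\setminus E$ would consist of a single cheap edge $\{a,b\}$ together with a family of \emph{rim} edges incident to the $v_i$, priced so that the LP can use each with $O(1/n)$ fractional mass. Setting $y_{ab}=y_{ba}=1/2$ would then contribute half of both required directions to every tight cut at once, and topping up the remaining halves with rim edges of small fractional weight would yield total LP cost $O(1)$.

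The integer lower bound is the main obstacle. An integer orientation of $\{a,b\}$ commits to a single direction and therefore leaves the opposite direction uncovered across every $a$--$b$-separating cut. I would identify $\Omega(n)$ pairwise disjoint ``halves'' of these tight cuts and argue that each uncovered half must be patched by a distinct rim edge, forcing $\Omega(n)$ integral rim edges and hence integer cost $\Omega(n)$. The delicate step is engineering $A$ so that conditions (i)--(ii) hold at precisely the right cuts; I expect this to require either a replicated gadget or a circular-cover construction along the rim. Once $A$ is fixed, the fractional feasibility is a direct verification, and the IP lower bound follows from a disjoint cut-packing argument broadly parallel in spirit, though adversarial in direction, to the uncrossing used in Theorem~\ref{thm:charac}.
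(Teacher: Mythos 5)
Your proposal is a plan rather than a proof, and the step you defer---``engineering $A$ so that conditions (i)--(ii) hold at precisely the right cuts''---is exactly where the whole difficulty lives; nothing in the sketch shows such an $A$ exists. In fact conditions (i) and (ii) pull against each other: to make every cut $S\subseteq\{v_1,\ldots,v_n\}$ already $2$-in- and $2$-out-covered by $A$ alone you need at least two arcs of $A$ entering and leaving each $v_i$ from the rest of the graph, and those arcs then cross some $a$--$b$ separating cuts (e.g.\ $\{a\}$ versus $\{a,v_i\}$), so you cannot have $d^{in}_A(S)=d^{out}_A(S)=1$ for \emph{every} $a$--$b$ cut. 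The quantitative accounting also does not cohere as stated: if each tight cut needs an extra $1/2$ of fractional in- and out-mass from rim edges each carrying only $O(1/n)$, then each cut must be crossed by $\Omega(n)$ rim edges; keeping the LP cost $O(1)$ then caps the number of rim edges at $O(n)$, so essentially every rim edge crosses every tight cut---which is hard to reconcile with the integral claim that the $\Omega(n)$ deficient cuts each require a \emph{distinct} rim edge. Finally, the fractional solution must be checked on \emph{all} cuts, not just the designed tight ones, and your free edge set $E$ is never specified.

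For comparison, the paper's construction does not use the half--half splitting of a purchased edge at all. It takes $V=\{u_1,\ldots,u_n,v_1,\ldots,v_n\}$, a directed tree $A_G$ plus doubled reverse arcs $A_B$, free undirected paths on the $u_i$'s and $v_i$'s, and a \emph{single} new edge $e_r=(u_n,v_1)$ of cost $1$. The integral lower bound is an aggregate counting argument over the $2n-1$ fundamental cuts of $A_G$ (a tree-composition): their total demand is $4n-2$, while $A$ plus any orientation of the $2n-2$ free edges supplies at most $4n-3$ crossings, so $e_r$ must be bought. Fractionally, the free edges' orientations are ``rotated'' continuously along the paths ($y_{u_iu_{i+1}}=1-i/n$, etc.) and $e_r$ is taken with $x_{e_r}=1/n$ oriented entirely one way, giving LP value $1/n$ and hence gap $n$. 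The lesson is that the gap is driven by a one-unit aggregate deficit over a tree of cuts that the individual cut constraints cannot see, not by an edge fractionally serving both directions; you would need to rebuild your construction around a mechanism of that kind.
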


The natural LP relaxation refers to \eqref{lp:3} in Section~\ref{sec:mixed};
Theorem~\ref{thm:orientation} is proved in the same section.

\section{Approximation Algorithm for $f$-orientable subgraph}\label{sec:f-orient}

In this section we prove Theorem~\ref{thm:main}. We start by giving a natural linear programming formulation~\eqref{lp:2} for the problem and note that it is not amenable to iterative rounding. Using a result of Frank~\cite{frank80} we give a different linear programming relaxation~\eqref{lp} that will be used for the algorithm. The feasible region of the new linear program is the projection of \eqref{lp:2} to an appropriate set of variables. The iterative rounding algorithm is given in Section~\ref{sec:alg}.
Theorem~\ref{thm:frac-value} asserts that every extreme point solution to \eqref{lp} has a component of fractional value at least $\frac 16$. Theorem~\ref{thm:main} then follows in the standard way.

To derive Theorem~\ref{thm:frac-value}, we first give a characterization of any extreme point via a simple family of tight constraints in Theorem~\ref{thm:charac}. The main ingredient in this characterization is a new uncrossing argument for partition and co-partition constraints, described in Section~\ref{sec:uncrossing}.   The proof of Theorem~\ref{thm:frac-value} is completed in Section~\ref{sec:iterative} via a counting argument building on this characterization.

\subsection{Linear Programming Formulations}

A natural linear programming relaxation for the minimum cost $f$-orientable subgraph problem is \eqref{lp:2} where we have $x$ variables for edges in $E^*$ and $y$ variables for the two possible orientations for each edge in $E\cup E^*$. For an integer solution, the $x$ variables are set to one for edges in $F$ and the $y$ variables are set to one for the appropriate orientation in $A$ of edges $E\cup F$ covering $f$.

\begin{equation}\tag{{\em
        LP1}} \label{lp:2}
\begin{aligned}
    \mbox{ minimize} \sum_{(u,v)\in E^*} c_{uv} x_{uv} & &\notag \\
    \mbox{ s.t.}~~
     y(\delta^{in}(U))& \geq f(U)  &  \;\; \forall\; U\subset V\notag\\
     y_{uv}+y_{vu} &= x_{uv}& \;\; \forall\; (u,v)\in E^*\\
      y_{uv}+y_{vu}&= 1 &\;\; \forall (u,v)\in E\notag\\
 \textbf{0}\le x&\le \textbf{1} &\notag\\
y&\ge \textbf{0}
\end{aligned}
\end{equation}

Surprisingly, we find that the~\eqref{lp:2} is not amenable to
iterative rounding and there are basic feasible solutions where each
non-integral $x$ variable is ${O(\frac{1}{n})}$; see
Remark~\ref{rem:lp:2} for such an example. We observe that such
basic feasible solutions are due to the presence of variables $y$,
which are auxiliary to the problem since they do not appear in the
objective. Thus we formulate an equivalent linear program where we
project the feasible space of \eqref{lp:2} onto the $x$-space. Before describing the linear program, we introduce some notation.

For a family
$\F$ of subsets of $V$,  we let $\bar{\F}\defeq\{\bar S: S\in\F\}$ denote
the set of complements. For two sets $\F$ and $\H$, let  $\F\dunion \H$ denote the multi set arising as
the disjoint union -  that is, elements occurring in both sets are
included with multiplicity. A collection
$\P$ of subsets of $V$ is called a \emph{partition} if every element
of $V$ is in exactly one set in $\P$ and it is called a \emph{co-partition}
if every element of $V$ is in all but one set in $\P$.
The subsets comprising a partition or co-partition will be referred to
as its \emph{parts}.
 If $\P$ is a co-partition, then $\bar \P=\{\bar S: S\in \P\}$ forms a partition, called the \emph{complement partition} of $\P$. Similarly, if $\P$ is a partition then $\bar \P$ forms a co-partition.
For any collection $\P$ of subsets of $V$, we denote $\chi_F(\P)$ to be
the set of edges $(u,v)\in F$ for which there exists two distinct sets
$S,T \in \P$ such that $u\in S\setminus T$ and $v\in T\setminus S$,
and we let $e_F(\P)\defeq
|\chi_F(\P)|$.
 For the graph $G=(V,E)$,
we shall also use $e_G(\P)\defeq e_E(\P)$.

The new linear program is based on the characterization of $f$-orientable graphs
given by Frank~\cite{frank80}.

\begin{theorem}[Frank, \cite{frank80}]\label{thm:orient}
Let $G=(V,E)$ be an undirected graph, and let $f:2^V\rightarrow
\mathbb{Z}$ be a nonnegative valued crossing $G$-supermodular function
with $f(\emptyset)=f(V)=0$.
Then $G$ is $f$-orientable if and only if for every partition and every co-partition
$\P$ of $V$,
\[e_{G}(\P)\geq \sum_{S\in \P}f(S).\]
\end{theorem}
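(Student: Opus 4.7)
For necessity, let $D=(V,A)$ be an orientation of $G$ that covers $f$. For any partition $\P=\{S_1,\ldots,S_k\}$, every edge with endpoints in different parts contributes $1$ to exactly one in-degree $d_A^{in}(S_i)$ (namely the part containing its head), and edges within a single part contribute nothing, so
$$\sum_{i=1}^{k} d_A^{in}(S_i) \;=\; e_G(\P) \;\geq\; \sum_{i=1}^{k} f(S_i).$$
For a co-partition $\P$ with complement partition $\bar\P$, the identities $d_A^{in}(S)=d_A^{out}(\bar S)$ and $\chi_A(\P)=\chi_A(\bar\P)$ (an edge joins distinct missing-parts iff it joins distinct complement-parts) give $\sum_{S\in\P} d_A^{in}(S) = \sum_{\bar S\in\bar\P} d_A^{out}(\bar S) = e_G(\bar\P) = e_G(\P)\geq \sum_{S\in\P} f(S)$.

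For sufficiency the plan is to reduce the problem to a polymatroidal feasibility question. Define $h(S):=f(S)-\tfrac12 d_G(S)$ for $\emptyset\neq S\subsetneq V$; by Claim~\ref{cl:G-sup}, $h$ is crossing supermodular. An orientation $A$ of $E$ covers $f$ precisely when the modular function $\phi_A(S):=d_A^{in}(S)-d_A^{out}(S)$ satisfies $\phi_A(S)\ge 2h(S)$ for every $S$. Writing $\phi_A(S)=\sum_{v\in S} z_A(v)$ with $z_A(v):=d_A^{in}(v)-d_A^{out}(v)$, the task becomes: find $z$ among the in-out-degree vectors of orientations of $G$ (a base polytope description, essentially Hakimi's theorem on realizable in-degree sequences) with $z(S)\ge 2h(S)$ for all $S$. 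Feasibility of this polymatroid-intersection / submodular-flow problem is governed by a standard min-max duality whose obstructions come from extremal violated set families; translating $h$ back into $f$ and $d_G$, these obstructions are precisely the partition and co-partition inequalities stated in the theorem.

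The technical obstacle is the uncrossing step in the crossing-supermodular setting, which is markedly subtler than for fully supermodular functions. For two crossing tight sets $S,T$, the standard replacement by $S\cap T$ and $S\cup T$ is legitimate only when $V\setminus(S\cup T)\neq\emptyset$; when $S\cup T=V$ the correct replacement is by $S\setminus T$ and $T\setminus S$, which are disjoint and, on iteration, coalesce into a co-partition rather than a partition. I would therefore take a minimal counterexample, pick an orientation minimizing a suitable defect potential, and uncross a maximal family of violated sets until it is forced into exactly one of two canonical shapes: a nested family extending to a partition witness, or a family of complement-nested sets extending to a co-partition witness. Making sure that precisely one of these two shapes arises in every case, and converting it into a violation of the stated inequality, is the technically intricate heart of the argument.
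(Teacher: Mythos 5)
This theorem is quoted by the paper from Frank's 1980 work and is not reproved there, so there is no in-paper argument to compare against; I can only assess your proposal on its own terms. Your necessity direction is complete and correct: for a partition each cross-edge enters exactly one part, so $\sum_{S\in\P}d_A^{in}(S)=e_G(\P)$, and for a co-partition the identities $d_A^{in}(S)=d_A^{out}(\bar S)$ and $e_G(\P)=e_G(\bar\P)$ reduce the claim to the partition case. Your setup for sufficiency is also the standard and correct one: passing to $h(S)=f(S)-\tfrac12 d_G(S)$ and to the modular excess $\phi_A(S)=\sum_{v\in S}\bigl(d_A^{in}(v)-d_A^{out}(v)\bigr)$ translates the problem into finding an orientation whose in-degree vector meets lower bounds given by a crossing supermodular function, which is indeed how Frank's proof (via splitting off or submodular flows) proceeds.

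However, the sufficiency direction as written is a plan rather than a proof, and the step you defer is precisely the content of the theorem. Two concrete problems. First, your sketch never uses the hypothesis $f\ge 0$, yet the conclusion is false without it: as Section~\ref{sec:discussion} of the paper explains, for crossing $G$-supermodular demand functions that may take negative values (the mixed-graph setting) the partition and co-partition inequalities do not suffice, and the obstructions are general tree-compositions. Any correct sufficiency argument must therefore invoke nonnegativity at a definite point --- typically to discard sets with $f(S)=0$ from a minimal violated family and to force the surviving tight family to collapse to a single partition or a single co-partition rather than a deeper tree-composition --- and you give no indication of where or how this happens. Second, the uncrossing rule you propose for the case $S\cup T=V$ (replace $S,T$ by $S\setminus T$ and $T\setminus S$) is not licensed by crossing supermodularity: when $S\cup T=V$ the sets $S$ and $T$ do not cross, and neither do $S$ and $\bar T$, so no inequality relating $h(S)+h(T)$ to $h(S\setminus T)+h(T\setminus S)$ is available in general. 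Thus the assertion that the violated family ``is forced into exactly one of two canonical shapes'' is exactly the technically hard claim, and it remains unproven. As it stands, the proposal establishes only the easy direction.
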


Note that necessity follows easily: for a partition or co-partition
$\P$, the edges in $\chi_G(\P)$ need to cover the sets in
$\P$ with a total demand  $ \sum_{S\in \P}f(S)$, and each  edge can
contribute to covering one of the parts in both possible orientations.

Theorem~\ref{thm:orient} verifies that the following is a valid linear programming relaxation of the Minimum Cost $f$-Orientable Subgraph Problem: requiring all $x_e$ values integer provides an augmenting edge set $F$.
\begin{equation}\tag{{\em LP2}} \label{lp}
 \begin{aligned}
   \mbox {minimize} \sum_{e \in E^*} c_e x_e & & \\
    \mbox{s.t.}~~
     x(\chi_{E^*}(\P))& \geq \sum_{S\in \P}f(S)-e_{G}(\P) &  \;\;\\
     & \forall\; \textrm{ partition or co-partition } \P \textrm{ of } V\notag;\\
   \textbf{0}\le x&\le \textbf{1} &\notag
\end{aligned}
\end{equation}
We use the Ellipsoid method \cite{groetschel81} to solve \eqref{lp},
by providing a separation oracle.
Theorem~\ref{thm:orient} implies that the feasible region of
\eqref{lp} is the projection of the feasible region of \eqref{lp:2} to
the $x$-space\footnote{This follows due to a standard scaling argument. Let $x$ denote a feasible solution to \eqref{lp} and let $M$ be a large integer such that $M\cdot x$ is an integral vector. Applying Theorem~\ref{thm:orient} gives an orientation of the obtained multi-graph satisfying requirement function $M\cdot f$. Taking each oriented edge to a fraction of $\frac1M$, we obtain a fractional solution to \eqref{lp:2}. The other direction showing that for any $(x,y)$ that is feasible to \eqref{lp:2}, $x$ is feasible for \eqref{lp} follows from a simple counting.}.
While \eqref{lp} still has an exponential number of constraints, it
can be further reduced to a submodular flow problem \cite{frank96} which can be solved efficiently using the value oracle for function $f$. 

\subsection*{$(k,\ell)$-edge connectivity} Consider the special case when the crossing $G$-super\-modular function corresponds to the requirement for $(k,\ell)$-edge connectivity, with
$f$ defined as in \eqref{def:kl}. Using $k\ge \ell$, the co-partition
inequalities become redundant in~\eqref{lp}, and we obtain the following theorem.

\begin{theorem}[Frank, \cite{frank80}]
An undirected graph $G=(V,E)$ admits a $(k,\ell)$-edge-connected
orientation for $k\ge \ell$ with root $r_0\in V$ if and only if
\[e_{G}(\P)\geq k(|\P|-1)+\ell,\]
for every partition $\P$ of $V$.
\end{theorem}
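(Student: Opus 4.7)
The plan is to derive the stated theorem as a direct specialization of Frank's characterization (Theorem~\ref{thm:orient}) applied to the function $f$ defined in \eqref{def:kl}. Two things need to be verified: first, that the partition inequality simplifies to the claimed form $e_G(\P) \ge k(|\P|-1) + \ell$, and second, that under the assumption $k\ge\ell$ the co-partition inequalities are dominated by the partition inequalities and can therefore be dropped.

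For the partition direction, I would first note that $f$ as in \eqref{def:kl} is nonnegative, crossing $G$-supermodular for every $G$, and vanishes on $\emptyset$ and $V$, so Theorem~\ref{thm:orient} applies, and a digraph covers $f$ iff it is $(k,\ell)$-edge-connected from $r_0$. For a partition $\P$ of $V$ with $|\P|\ge 2$, exactly one part contains $r_0$; call it $S_0$. Since $|\P|\ge 2$ we have $S_0\ne V$, so $f(S_0)=\ell$, and for each other part $S\in\P\setminus\{S_0\}$ we have $r_0\notin S$ and $S\ne \emptyset$, so $f(S)=k$. Summing gives $\sum_{S\in\P}f(S) = k(|\P|-1)+\ell$. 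The degenerate partition $\P=\{V\}$ contributes $f(V)=0$ on the right and trivially $e_G(\P)=0$ on the left, so Frank's partition inequality becomes exactly $e_G(\P)\ge k(|\P|-1)+\ell$ for all partitions.

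For the co-partition direction, let $\P=\{S_1,\dots,S_t\}$ be a co-partition, and set $\bar\P=\{\bar S_1,\dots,\bar S_t\}$, which is a partition. The key observation is that $\chi_G(\P)=\chi_G(\bar\P)$: each $u\in V$ is outside exactly one member of $\P$, namely the unique $S_{i(u)}$ with $u\in \bar S_{i(u)}$, so a pair $S,T\in\P$ with $u\in S\setminus T$ and $v\in T\setminus S$ forces $T=S_{i(u)}$ and $S=S_{i(v)}$, which is possible (with $S\ne T$) iff $i(u)\ne i(v)$, i.e., iff $u$ and $v$ lie in different parts of $\bar\P$. Hence $e_G(\P)=e_G(\bar\P)$. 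Now exactly one part of $\bar\P$ contains $r_0$, so exactly one $S_i$ does not contain $r_0$; thus $\sum_{S\in\P}f(S)=k+\ell(t-1)$, and Frank's co-partition inequality reads $e_G(\bar\P)\ge k+\ell(t-1)$.

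Finally, I would compare this to the partition inequality already imposed on $\bar\P$, namely $e_G(\bar\P)\ge k(t-1)+\ell$. Using $k\ge\ell$ one has
\[k(t-1)+\ell - \bigl(k+\ell(t-1)\bigr) = (k-\ell)(t-2) \ge 0\]
for all $t\ge 2$, so the partition inequality for $\bar\P$ is at least as strong as the co-partition inequality for $\P$; for $t=1$ both inequalities are trivial. Hence every co-partition constraint from Theorem~\ref{thm:orient} is implied by a partition constraint, and the characterization reduces to the stated form. The only mildly delicate step is the bookkeeping identity $e_G(\P)=e_G(\bar\P)$ for a co-partition, but this follows immediately once one tracks which single part each vertex is missing from.
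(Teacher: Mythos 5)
Your proof is correct and is exactly the argument the paper intends (the paper merely asserts that ``using $k\ge\ell$, the co-partition inequalities become redundant''); you supply the details: the sum $\sum_{S\in\P}f(S)$ equals $k(|\P|-1)+\ell$ for partitions and $k+\ell(|\P|-1)$ for co-partitions, the identity $e_G(\P)=e_G(\bar\P)$, and the comparison $(k-\ell)(t-2)\ge 0$. The only (shared) blemish is the trivial partition $\P=\{V\}$, for which Frank's inequality is $0\ge 0$ but the displayed form reads $0\ge\ell$; both the paper's statement and your last sentence of the partition paragraph implicitly assume partitions with at least two parts there.
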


A remarkable consequence is that $(k,\ell)$-edge-connected
orientability is a property independent of the choice of the root node $r_0$.
 Specializing even further, for $k=\ell$ it is equivalent to
 $k$-edge-connectivity, and it is easy to see that the condition
can be further simplified to $d_G(S)\ge 2k$ for every $S\subseteq V$, that
 is, the graph is $2k$-edge-connected. This gives the classical
 theorem of Nash-Williams \cite{nash-williams-orient}. For $\ell=0$, the
 above theorem is equivalent to Tutte's theorem \cite{tutte} on rooted
 $k$-edge-connected orientability.

\subsubsection{Iterative Rounding Algorithm}\label{sec:alg} We use the iterative rounding
algorithm in Figure~\ref{alg:ir}, as introduced by Jain~\cite{jain}.
Further, note that if a function is crossing $G$-supermodular for a
graph $G=(V,E)$, than it is crossing $G'$-supermodular for every
$G'=(V,E')$ if $E'\supseteq E$.
Theorem~\ref{thm:frac-value}
guarantees that $E'$ is strictly extended in every iteration, hence
the algorithm terminates in $O(|E^*|)$ iterations. The argument showing
that this is a $\ratio$-approximation follows the same lines as in \cite{jain} assuming Theorem~\ref{thm:frac-value}.

\begin{figure*}[ht]
\centering
\fbox{\parbox{\textwidth}{
\textsl{Input:} 
Undirected graph $G=(V,E)$, an edge set $E^*\subseteq {V\choose 2}$
with a
cost function
$c:E^*\rightarrow \RR_+$,  a nonnegative valued crossing
$G$-supermodular function $f:2^{V}\rightarrow \ZZ_+$.\\
\textsl{Output:} An $f$-orientable graph $(V,E')$ with $E\subseteq
E'\subseteq E\cup E^*$.
\begin{enumerate}
\item $E'\gets E$.
\item While $E^*\neq \emptyset$
\begin{enumerate}
\item Solve \eqref{lp} to obtain a basic optimal solution $x^*$.
\item $E^*\gets E^*\setminus \{e: x^*_e=0\}$.
\item $E'\gets E'\cup \{e\in E^*: x^*_e\geq\frac1\ratio\}$.
\end{enumerate}
\item Return $(V,E')$.
\end{enumerate}
}}
\caption{Iterative rounding algorithm}\label{alg:ir}
\end{figure*}

\begin{theorem}\label{thm:frac-value}
Let $x^*$ be an extreme point solution to \eqref{lp} where $f:2^V\rightarrow
\mathbb{Z}$ is a nonnegative valued crossing $G$-supermodular function
with $f(\emptyset)=f(V)=0$. Then there exists an edge $e$ such that $x^*_e\geq \frac{1}{\ratio}$.
\end{theorem}

The rest of the section is devoted to the proof of Theorem~\ref{thm:frac-value}. Section~\ref{sec:cross-free} introduces the appropriate notion
of cross-freeness for partitions and co-partitions that will
characterize extreme point solutions of (\ref{lp}). This characterization is given in Section~\ref{sec:extreme} and will rely on a new uncrossing that we introduce for partitions and co-partitions that
 will be given in Section~\ref{sec:uncrossing}. In Section~\ref{sec:partial}, we show a partial order that can be derived from this cross-free family of partition and co-partitions. Results of these sections will be needed to
prove Theorem~\ref{thm:frac-value} in Section~\ref{sec:iterative}.

\subsection{Strongly Cross-Free Family}\label{sec:cross-free}
A family $\F$ is \emph{laminar} if for any two sets $S,T\in
\F$, either they are disjoint or one contains the other.
A collection of sets $\F$ is
\emph{cross-free} if no two sets $S,T\in \F$ are
crossing.
Note that a cross-free family is non-laminar if there exists $S,T\in \F$ with $S\cap T\neq\emptyset$, $S\cup T=V$.
Observe that if a family $\F$ is cross-free, it remains so after adding the complements of some of the sets.

For two partitions or co-partitions $\P$ and $\Q$, we say that $\P$ and $\Q$ are {\em cross-free}, if $\P\cup \Q$ is a cross-free family. This is a natural and desirable property of the family of tight constraints; however, we will need a stronger notion, called {\em strongly cross-free}, as introduced later in this section.

\begin{remark}\em
The strongly cross-free family differs from other structured families obtained when uncrossing partitions, for example, in the work of Chakrabarty, K\"onemann and Pritchard~\cite{ChakrabartyKP13} (see also Chapter 48.2 and Chapter 49.6 in Schrijver~\cite{schrijver-book} for other examples).
In these works, partitions can be uncrossed to their meet and join in the partition lattice. (The partition lattice is defined by the partial order $\R<\R'$ if $\R$ is a refinement of $\R'$.) That is, any two partitions $\P$ and $\Q$ can be uncrossed to two partitions $\P\wedge \Q$ and $\P\vee \Q$ such that the components of  $\P\wedge \Q$  are the intersections of the components of $\P$ and $\Q$, and the components of $\P\vee\Q$ are the connected components of the hypergraph $\P\cup\Q$. Such an uncrossing is not possible in our framework.
Figure~\ref{fig:uncrossing3} illustrates an example where two partitions in our setting cannot be uncrossed while such partitions would have been uncrossed in \cite{ChakrabartyKP13,schrijver-book}.  The reason is that our requirement function $f$ is crossing $G$-supermodular as compared to the fully supermodular case in ~\cite{ChakrabartyKP13,schrijver-book}. For partitions $\P$ and $\Q$, if there are parts $P\in \P$ and $Q\in \Q$ such that $P\cup Q=V$ then the constraints for $\P$ and $\Q$ cannot be uncrossed to the meet and join in the partition lattice.

In our notion of strongly cross-free partitions, we require that $\P\cup \Q$ is a cross-free family, but we cannot require the stronger property that $\P\cup\Q$ is laminar. This is in contrast with the above cited works, where $\P$ and $\Q$ are uncrossed simply if one is a refinement of the other, hence
$\P\cup\Q$ is a laminar family.
\end{remark}

\begin{figure*}[tp]
\centering    \includegraphics[width=5cm, height=3.5cm]{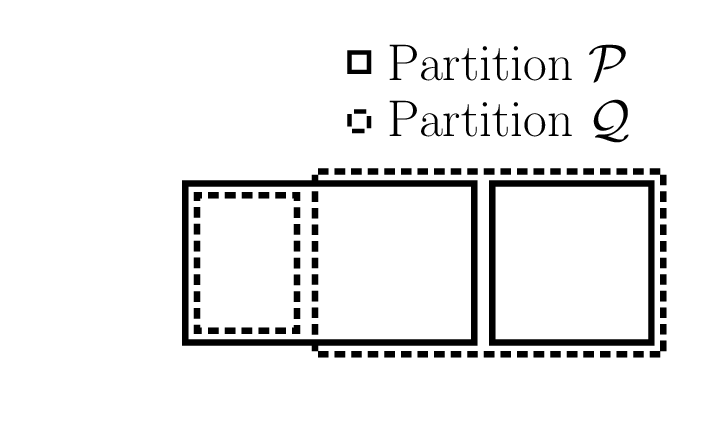}
\caption{Partitions $\P$ and $\Q$ whose parts intersect but do not cross. $\P$ and $\Q$ cannot be uncrossed in our setting. } \label{fig:uncrossing3}
\end{figure*}

\medskip

Some further notation and definitions are in order before we introduce the notion of strongly cross-freeness.
By a {\em sub-partition} of $V$ we mean a collection of disjoint subsets of $V$ (equivalently, a partition of a subset of $V$).
The sets in a sub-partition will also be referred to as {\em parts}.
 For a sub-partition $\P$, let  $\supp(\P)$ denote the union of parts of $\P$.
Two sub-partitions $\P$ and $\Q$ are \emph{disjoint} if $\supp(\P)\cap\supp(\Q)=\emptyset$.

Let us fix a special node $\r\in V$, called the {\em root node}. This can be chosen arbitrarily, but remains fixed throughout the  argument.
 We use the following notational convention. For every partition $\P=\{P_1,\ldots, P_p\}$ of $V$, $P_1$ (that is, the part indexed 1)  contains $\r$; whereas for every co-partition  $\P=\{P_1,\ldots, P_p\}$, $P_1$ is the single part not containing \r.
With every partition or co-partition $\P$ of $V$, we associate a sub-partition $\tilde\P$ as follows.
\[
\tilde\P\defeq\begin{cases}
\{P_2,\ldots,P_p\},&\mbox{if $\P$ is a partition};\\
\{\bar P_2,\ldots,\bar P_p\},&\mbox{if $\P$ is a co-partition}.
\end{cases}
\]
Note that if $\P$ is a partition then $\supp(\tilde \P)=\bar P_1$, whereas if $\P$ is a co-partition then $\supp(\tilde \P)=P_1$.
The next claim summarizes simple properties of cross-free partitions and co-partitions.
\begin{claim}\label{cl:sup-lam}
Assume $\P$ and $\Q$ are cross-free partitions or co-partitions. For the sub-partitions $\tilde\P$ and $\tilde\Q$ as defined above, $\tilde\P\cup \tilde\Q$ is a laminar family. Further, the sets
$\supp(\tilde\P)$ and $\supp(\tilde\Q)$ are either disjoint, or one of them contains the other.
\end{claim}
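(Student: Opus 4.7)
The plan is to exploit the root vertex $\r$ as a distinguished element that promotes the weaker non-crossing hypothesis to the stronger laminar conclusion. The key observation driving everything is that, under the indexing convention fixed in the paper, every set appearing in $\tilde\P\cup\tilde\Q$ as well as both supports $\supp(\tilde\P)$ and $\supp(\tilde\Q)$ misses $\r$. This automatically rules out the fourth non-crossing case $S\cup T=V$ and leaves only the three laminar configurations.

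First I would reduce the laminarity statement for $\tilde\P\cup\tilde\Q$ to checking a single pair $A\in\tilde\P$, $B\in\tilde\Q$, since each of $\tilde\P$ and $\tilde\Q$ is already a subpartition and hence its parts are pairwise disjoint. For such a cross pair, $A$ equals $P_i$ (if $\P$ is a partition) or $\bar P_i$ (if $\P$ is a co-partition) for some $i\ge 2$, and similarly $B$ equals $Q_j$ or $\bar Q_j$ for some $j\ge 2$. Cross-freeness of $\P\cup\Q$ together with the elementary fact that complementation preserves the crossing relation yields in every case that $A$ and $B$ are non-crossing.

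Next, in each of the three configurations (partition/partition, partition/co-partition, co-partition/co-partition) I would verify $\r\notin A\cup B$. The verification is uniform: $\r\notin P_i$ for $i\ge 2$ when $\P$ is a partition (since $\r\in P_1$), and $\r\notin\bar P_i$ for $i\ge 2$ when $\P$ is a co-partition (since $P_1$ is the unique part missing $\r$, hence every $P_i$ with $i\ge 2$ contains $\r$). Combined with non-crossing, $A\cup B\subsetneq V$ forces $A$ and $B$ to be disjoint or nested, which is the laminar conclusion.

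For the support claim I would run the same three-line argument on the pair $\supp(\tilde\P)\in\{\bar P_1,P_1\}$ and $\supp(\tilde\Q)\in\{\bar Q_1,Q_1\}$: non-crossing of $P_1$ and $Q_1$ transfers through complementation to non-crossing of the two supports, and the indexing convention again places $\r$ outside each support (because $\r\in P_1$ exactly when $\P$ is a partition, which is exactly when $\supp(\tilde\P)=\bar P_1$), ruling out the union-equals-$V$ case. The main obstacle is purely notational: the argument is essentially a two-by-two (or three-case) bookkeeping exercise through the partition/co-partition distinctions and the attendant complementation signs, with no deeper combinatorial content than the one observation that $\r$ always lies outside the relevant sets.
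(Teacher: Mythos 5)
Your proof is correct, and since the paper omits the proof of this claim as ``straightforward,'' your argument is precisely the intended one: every set in $\tilde\P\cup\tilde\Q$ and both supports avoid the root $\r$, so the union-equals-$V$ alternative in the non-crossing condition is excluded, and cross-freeness of $\P\cup\Q$ (which is preserved under complementing members) then forces disjointness or nesting. The reduction to a single cross pair $A\in\tilde\P$, $B\in\tilde\Q$ is legitimate because each of $\tilde\P$, $\tilde\Q$ is a subpartition and hence internally laminar.
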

\begin{proof}
As noted above, complementing sets in a cross-free family keeps cross-freeness, hence $\tilde\P\cup \tilde\Q$ is cross-free. The root $r$ is not contained in any of these sets, implying laminarity. This also implies the last claim.
\end{proof}

\medskip

The sub-partition
$\Q$  \emph{dominates} the sub-partition $\P$, if every part of $\P$ is
a subset of some part of $\Q$. We further say that $\Q$ {\em strongly dominates} $\P$, if every part of $\P$ is a subset of the same part $Q\in\Q$, that is, $\supp(\P)\subseteq Q$ for some $Q\in\Q$. The following transitivity properties are straightforward and allows us to deal with partitions and co-partitions in an unified manner in many cases.
\begin{claim}\label{cl:trans}
Consider sub-partitions $\P$, $\Q$ and $\R$. If $\Q$ dominates $\P$ and $\R$ dominates $\Q$, then $\R$ also dominates $\P$. Further, if $\Q$ dominates $\P$ and $\R$ strongly dominates $\Q$, then $\R$ strongly dominates $\P$. Similarly, if
 $\Q$ strongly dominates $\P$ and $\R$ dominates $\Q$, then $\R$ strongly dominates $\P$.
\end{claim}

\medskip

We are ready to introduce the notion of strong cross-freeness.
\begin{definition}
Consider two partitions or co-partitions $\P$ and $\Q$ with associated sub-partitions $\tilde\P$ and $\tilde\Q$. We say that $\P$ and $\Q$ are
 \emph{strongly cross-free}, if
\begin{itemize}
\item $\P\cup \Q$ is a cross-free family, and further
\item if $\P$ and $\Q$ are both partitions or both co-partitions, then the associated sub-partitions $\tilde\P$ and $\tilde\Q$ are either disjoint or one of them dominates the other;
\item  if one of $\P$ and $\Q$ is a partition and the other is a co-partition, then
$\tilde\P$ and $\tilde\Q$ are either
 disjoint or one of them strongly dominates the other.
\end{itemize}
A family $\aP$ of partitions and co-partitions is strongly cross-free,
if any two of its members are strongly cross-free.
\end{definition}

\begin{figure*}[tp]
  \begin{minipage}[b]{0.45\linewidth}
    \includegraphics[width=6cm, height=6cm]{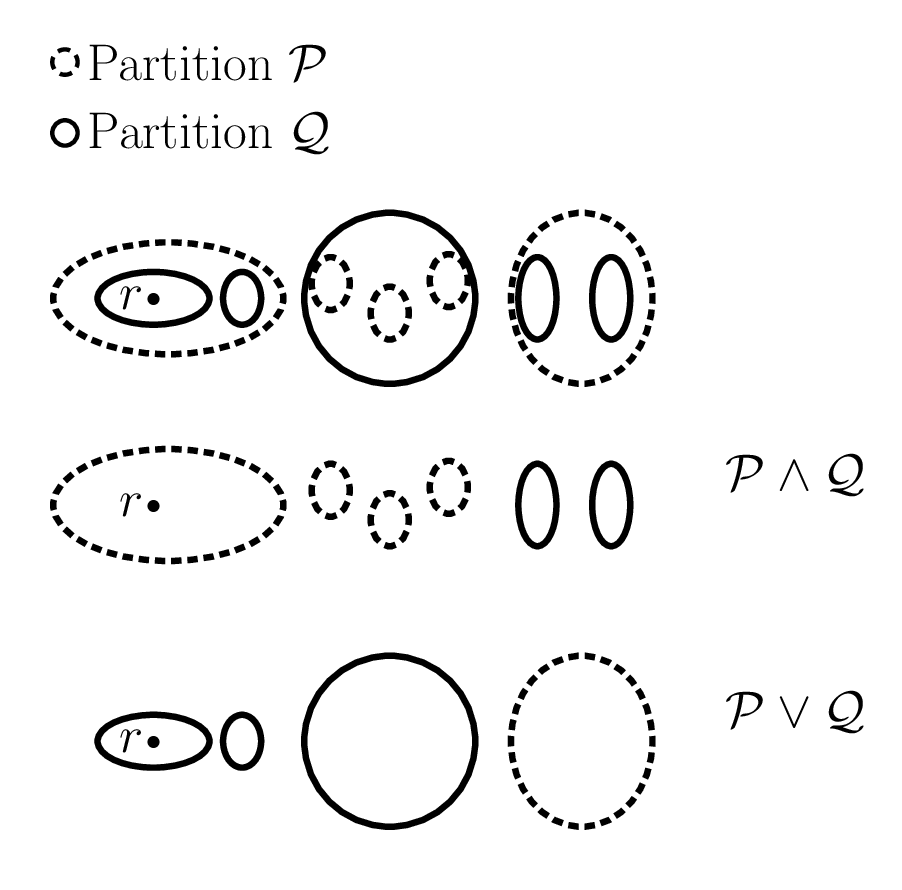}
\caption{Uncrossing two partitions $\P$ and $\Q$.} \label{fig:uncrossing1}
  \end{minipage}
  \hspace{0.5cm}
  \begin{minipage}[b]{0.55\linewidth}
    \includegraphics[width=7cm, height=7cm]{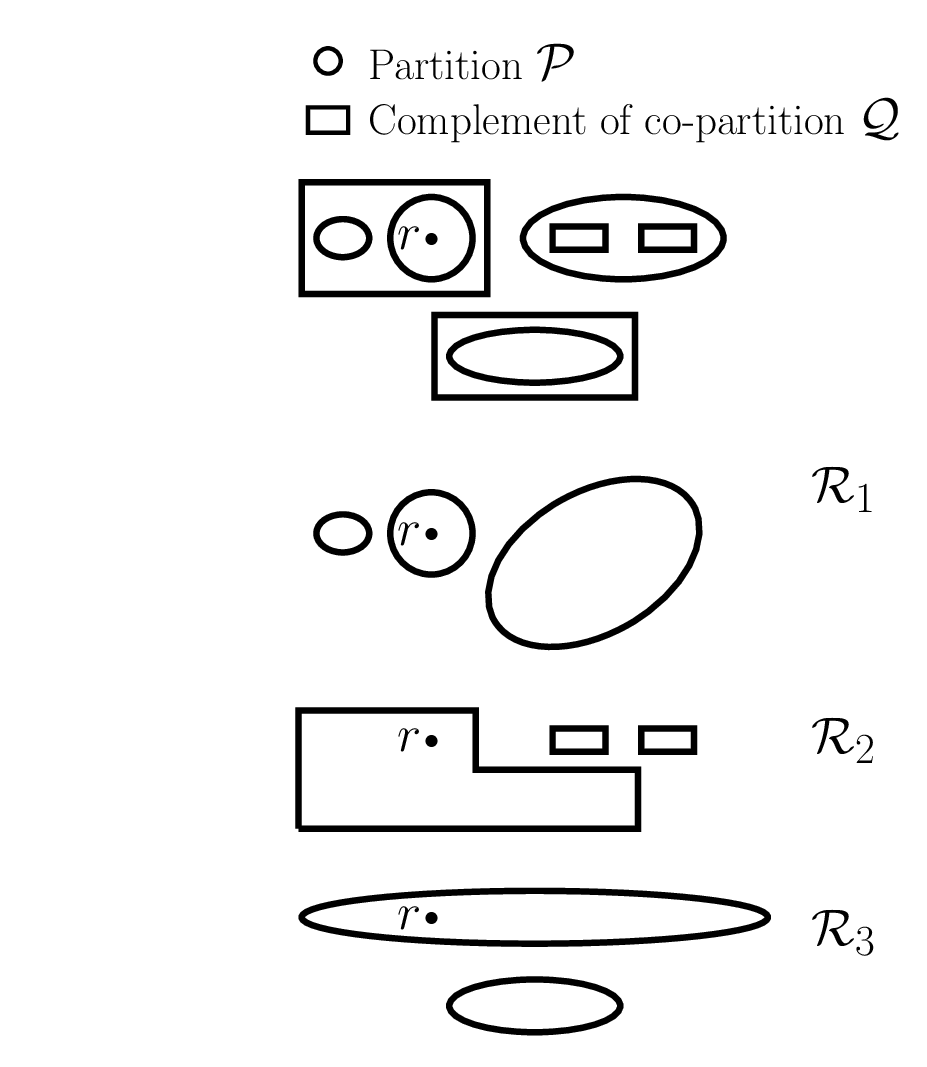}
\caption{Uncrossing partition $\P$ and co-partition $\Q$.} \label{fig:uncrossing2}
 \end{minipage}
\end{figure*}

As an example, consider Figures~\ref{fig:uncrossing1} and
\ref{fig:uncrossing2}. In Figure~\ref{fig:uncrossing1}, the two
partitions $\P$ and $\Q$ are {\em not} strongly cross-free since none dominates the other and $\tilde\P$ and $\tilde \Q$ are not disjoint. This is despite the fact that
$\P\cup\Q$ is a cross-free family. However, both of them are strongly
cross-free with both $\P\vee \Q$ and $\P\wedge\Q$, which are also
strongly cross-free with each other (these uncrossed sub-partitions will be
defined in Section~\ref{sec:cross-free}).  Figure~\ref{fig:uncrossing2}
illustrates a partition $\P$ and a co-partition $\Q$ that are not
strongly cross-free, yet both of them are strongly cross-free with all
of $\R_1$, $\R_2$ and $\R_3$. Also, the $\R_i$'s are pairwise strongly cross-free.

It may lead to ambiguity that a partition comprising two sets is also a co-partition. However, note that cross-freeness and strongly cross-freeness coincide if either $\P$ or $\Q$ comprises two sets, and therefore this does not make a difference. More generally, to differentiate cross-freeness from strongly cross-freeness, we will call two partitions or co-partitions $\P$ and $\Q$ \emph{weakly cross-free} whenever they are cross-free (that is $\P\cup \Q$ is a cross-free family of sets), but not strongly cross-free.

The following claim provides an alternative view of a strongly cross-free  partition and co-partition.
\begin{claim}\label{cl:strongly-cf}
 If $\P$ is a partition and $\Q$ is a co-partition, then $\P$ and $\Q$ are strongly cross-free if and only if  there exist parts $P\in \P$ and $Q\in \Q$ such that $Q\subseteq P$.
\end{claim}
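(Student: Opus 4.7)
The plan is to prove the two directions of the biconditional separately, exploiting the conventions that $P_1 \in \P$ is the unique part containing the root $\r$ and $Q_1 \in \Q$ is the unique part not containing $\r$. These conventions yield $\supp(\tilde\P) = \bar P_1$ and $\supp(\tilde\Q) = Q_1$, which are the key identifications used throughout.

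For the forward direction, assume $\P$ and $\Q$ are strongly cross-free. By the definition of strong cross-freeness between a partition and a co-partition, either $\tilde\P$ and $\tilde\Q$ are disjoint, or one strongly dominates the other, and each case directly produces the required containment. Disjointness yields $\bar P_1 \cap Q_1 = \emptyset$, i.e., $Q_1 \subseteq P_1$, so take $P = P_1$, $Q = Q_1$. If $\tilde\Q$ strongly dominates $\tilde\P$, then $\supp(\tilde\P) = \bar P_1$ lies inside a single part $\bar Q_j$ with $j \geq 2$, so $Q_j \subseteq P_1$. Symmetrically, if $\tilde\P$ strongly dominates $\tilde\Q$, then some $P_i$ with $i \geq 2$ contains $\supp(\tilde\Q) = Q_1$.

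For the backward direction, suppose $Q \subseteq P$ for some $P \in \P$ and $Q \in \Q$. The key observation is that the configuration $P = P_i$, $Q = Q_j$ with $i \geq 2$ and $j \geq 2$ is impossible: from $j \geq 2$ we have $\r \in Q_j$, so $Q_j \subseteq P_i$ would force $\r \in P_i$, contradicting $i \geq 2$. The three remaining configurations $(P_1, Q_1)$, $(P_1, Q_j)$ with $j \geq 2$, and $(P_i, Q_1)$ with $i \geq 2$ translate back, via the support identities above, into the ``disjoint'', ``$\tilde\Q$ strongly dominates $\tilde\P$'', and ``$\tilde\P$ strongly dominates $\tilde\Q$'' alternatives, respectively. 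Cross-freeness of $\P \cup \Q$ remains to be verified by a case analysis over pairs $(P_i, Q_k)$: using that distinct $P_i$'s are disjoint and the $\bar Q_k$'s form a partition of $V$, the containment $Q \subseteq P$ propagates to show that in each of the three configurations, at least one of the four sets $P_i \cap Q_k$, $P_i \setminus Q_k$, $Q_k \setminus P_i$, $\bar P_i \cap \bar Q_k$ is empty for every pair $(i, k)$.

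I expect the main obstacle to be this cross-freeness bookkeeping in the backward direction: within each of the three feasible $(P,Q)$ configurations, the containment $Q \subseteq P$ must be chased through the partition and complement-partition structure separately for each pair $(P_i, Q_k)$ to identify which of the four defining sets collapses. By contrast, the dominance condition on the subpartitions follows almost immediately from the definitions once the identifications $\supp(\tilde\P) = \bar P_1$ and $\supp(\tilde\Q) = Q_1$ are combined with the impossibility of the $(i \geq 2, j \geq 2)$ configuration.
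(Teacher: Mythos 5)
Your proposal is correct and follows essentially the same route as the paper: the forward direction is the identical three-case analysis (disjoint supports, $\tilde\Q$ strongly dominating $\tilde\P$, and vice versa) using $\supp(\tilde\P)=\bar P_1$ and $\supp(\tilde\Q)=Q_1$, and the backward direction matches the paper's (terse) converse, with your observation that the $(P_i,Q_j)$, $i,j\ge 2$ configuration is excluded by the root being a correct and slightly more explicit account of why the containment lands in one of the three dominance/disjointness alternatives. The cross-freeness verification you flag as the remaining bookkeeping does go through in each of the three configurations (e.g.\ $Q_1\subseteq P_1$ forces $\bar Q_k\subseteq P_1$ for all $k\ge 2$, whence every pair $(P_i,Q_k)$ has one of its four defining sets empty), exactly as the paper asserts without detail.
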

\begin{proof}
Assume first $\P$ and $\Q$ are strongly cross-free. Recall that $\supp(\tilde\P)=\bar P_1$ and $\supp(\tilde\Q)=Q_1$. If
$\tilde\P$ and $\tilde\Q$ are disjoint, then $\bar P_1\cap Q_1=\emptyset$ gives $Q_1\subseteq P_1$. If $\tilde\Q$ strongly dominates $\tilde\P$, that is, $\bar P_1=\supp(\tilde\P)\subseteq Q'$ for some $Q'\in\tilde\Q$. By definition, $Q'=\bar Q$ for some $Q\in\Q$, and hence $\bar P_1\subseteq \bar Q$, that is, $Q\subseteq P_1$. Finally, assume $\tilde\P$ strongly dominates $\tilde\Q$. Then $Q_1= \supp(\tilde\Q)\subseteq P$ for some $P\in \tilde\P$, which is also a part of $\P$.
The converse direction is also easy to verify; note that $Q\subseteq P$ implies that $\P\cup \Q$ is cross-free.
\end{proof}

As a consequence of the above claim, if $\P$ is a partition and $\Q$ is a co-partition, then strongly cross-freeness is independent of the choice of the root $r$.

Before we end this section, we give a couple of technical lemmas that are useful in uncrossing of partitions and co-partitions. A collection $\cal F$ of subsets of the ground set $V$ is called \emph{$t$-regular} for some positive integer $t$, if every element of $V$ is contained in exactly $t$ members of $\cal F$. The family $\cal F$ is called regular, if it is $t$-regular for some value $t$.
For example, a partition is a 1-regular family, and a co-partition of cardinality $t+1$ is a $t$-regular family. The following lemma reveals the distinguished role  partitions and co-partitions play.

\begin{lemma}[{\cite[Lemma 15.3.1]{frankbook}}]\label{lem:decompose}
Every cross-free regular family $\cal F$ of $V$ can be decomposed into the disjoint union of partitions and co-partitions of $V$.
\end{lemma}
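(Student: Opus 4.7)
My plan is to use the classical tree representation of cross-free families. I will first associate to $\F$ a tree $T$ and a map $\phi: V \to V(T)$ such that each $S \in \F$ corresponds bijectively to an oriented edge $(u_S, v_S)$ of $T$, with $S = \phi^{-1}(W_{v_S})$ where $W_{v_S}$ is the component of $T - u_S v_S$ containing $v_S$. This is a standard construction; I may additionally assume the representation is minimal, so that every non-image node has degree at least three. Under this correspondence $\F$ is a set of oriented edges of $T$, and the $t$-regularity condition reads: for each image node $w \in \phi(V)$, the number of arcs of $\F$ with $w$ on the head side equals $t$.

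My first step is to compare this count at two adjacent image nodes $w, w'$ joined by a tree edge $e$: only the arcs on $e$ contribute to the difference, so on every edge between two image nodes the arcs come in balanced pairs. Each such pair $\{(u, v), (v, u)\}$ corresponds to complementary sets $S, \bar S \in \F$, giving a two-element family $\{S, \bar S\}$ that is simultaneously a partition and a co-partition of $V$. I extract all such pairs.

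The remaining arcs all lie on edges incident to a non-image node. The key structural observation is that for any subtree $T_0 \subseteq T$ consisting entirely of non-image nodes, the components of $T - V(T_0)$ hang off the boundary edges of $T_0$; since $\phi^{-1}(V(T_0)) = \emptyset$, these components pull back via $\phi^{-1}$ to a partition $\{A_1, \ldots, A_d\}$ of $V$. An outward boundary arc of $T_0$ corresponds to some $A_i$, while an inward boundary arc corresponds to $V \setminus A_i$, so a complete collection of outward (resp.\ inward) boundary arcs at $T_0$ forms a partition (resp.\ co-partition) of $V$.

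I then decompose the remaining arcs iteratively: at each step, find a non-image subtree $T_0$ admitting either a complete set of outward or inward boundary arcs in the current arc family, and extract this as a partition or co-partition. The regularity condition, applied at the image neighbours of $T_0$, forces a flow-conservation balance on the arcs incident to $T_0$, and a greedy Eulerian-style extraction shows that such a $T_0$ always exists until the family is exhausted. The main technical obstacle is this last existence claim: proving rigorously that the balance around non-image subtrees always permits peeling off a complete outward or inward star is the bulk of the argument, and requires careful bookkeeping of arc multiplicities and an appropriate choice of $T_0$.
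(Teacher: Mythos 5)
The paper does not prove this lemma at all --- it is quoted verbatim from Frank's book (Lemma 15.3.1) --- so there is no internal proof to compare against; your proposal has to stand on its own. The framework you set up is the standard and correct one (tree representation of a cross-free family, regularity read off as a counting condition at image nodes, boundary stars of non-image subtrees pulling back to partitions and co-partitions), and your step pairing off arcs on edges between two image nodes is sound: arcs off $e$ see $w$ and $w'$ on the same side, so $t$-regularity at both forces the arcs on $e$ to balance, and each pair $\{S,\bar S\}$ is indeed a two-part partition.

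However, the proof as written has a genuine gap, and you name it yourself: the claim that one can always find a non-image subtree $T_0$ whose boundary carries a complete outward or complete inward set of arcs, and that iterating this exhausts the family, is the entire content of the lemma, and it is asserted rather than proved. This step is not routine. Even in the simplest case of a single non-image node $z$ with neighbours mapping to sets $A_1,\dots,A_d$ and arc multiplicities $a_i$ (outward) and $b_i$ (inward), regularity only gives $a_i-b_i=\mathrm{const}$; when that constant is $0$ neither a complete outward nor a complete inward star need exist (e.g.\ $a_1=b_1=1$, $a_2=b_2=0$), and the leftover arcs are balanced pairs on image-to-non-image edges, which your step (a) does not extract --- so the extraction rule must be broadened to peel off balanced pairs on \emph{arbitrary} edges, not just image--image ones. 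Further complications you would have to handle: arcs on edges joining two non-image nodes are interior to any maximal non-image subtree and are boundary arcs only for a strictly smaller $T_0$, so a consistent global choice of subtrees is needed; and extracting a $q$-part co-partition lowers the regularity by $q-1$, not by $1$, so the induction on $t$ must be set up with care. Until the existence-and-termination argument for the peeling procedure is written out (or the lemma is simply cited, as the paper does), the proof is incomplete.
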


For two families
 $\F$ and $\H$ of subsets of $V$, we let $\nu(\F,\H)$ denote the
 number of pairs $X\in \F$ and $Y\in \H$ such that $X$ and $Y$
 cross. The following claim is a standard ingredient of uncrossing
 arguments, and easy to verify.
\begin{claim}\label{cl:uncross-decrease}
Let $A,B$ and $C$ be arbitrary subsets of $V$. Then
$\nu(\{A,B\},\{C\})\ge \nu(\{A\cap B, A\cup B\},\{C\})$.
\end{claim}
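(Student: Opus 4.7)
The plan is a case analysis on how many of $A\cap B$, $A\cup B$ cross $C$. If neither crosses, the right-hand side is $0$ and the inequality is trivial.

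The main case is when both $A\cap B$ and $A\cup B$ cross $C$. I would read off four witness elements directly from the crossing definitions: $x\in(A\cap B)\cap C$ and $y\in(A\cap B)\setminus C$ come from $A\cap B$ crossing $C$, and $z\in C\setminus(A\cup B)$ and $w\in V\setminus((A\cup B)\cup C)$ come from $A\cup B$ crossing $C$. Since $x,y\in A\cap B$ while $z,w\in\bar A\cap\bar B$, the same four points simultaneously certify that $A$ crosses $C$ (they occupy $A\cap C$, $A\setminus C$, $C\setminus A$, and $V\setminus(A\cup C)$ respectively) and that $B$ crosses $C$, so the left-hand side equals $2$.

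For the case when exactly one of $A\cap B$, $A\cup B$ crosses $C$, I would use the complementation symmetry $X\mapsto\bar X$, which preserves the crossing relation and swaps $\cap$ with $\cup$, to reduce to the sub-case where $A\cap B$ crosses $C$ but $A\cup B$ does not. Witnesses $x,y$ as above still show that $A$ and $B$ each meet both $C$ and $\bar C$, so the only possible obstruction to $A$ crossing $C$ is $C\subseteq A$ or $A\cup C=V$, and similarly for $B$. I would then argue by contradiction, assuming that both $A$ and $B$ fail to cross $C$: the remaining two crossing conditions for $A\cap B$ furnish witnesses in $C\setminus(A\cap B)\subseteq(\bar A\cap C)\cup(\bar B\cap C)$ and in $V\setminus((A\cap B)\cup C)\subseteq(\bar A\cap\bar C)\cup(\bar B\cap\bar C)$, and running these through the four combinations of obstructions for $A$ and $B$ shows that at least one of $A$, $B$ must cross $C$ after all.

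The main obstacle is this last step, because the two outside witnesses from the crossing of $A\cap B$ may split between $\bar A$ and $\bar B$---the $C$-witness could lie in $\bar A\cap C$ while the $\bar C$-witness lies in $\bar B\cap\bar C$, or vice versa. Tracking this split carefully so that one of $A$ or $B$ still inherits a full complement of four crossing witnesses with $C$ is the combinatorial heart of the argument.
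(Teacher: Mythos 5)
The paper offers no proof of this claim (it is dismissed as ``easy to verify''), so the only thing to assess is whether your argument closes. It does not: the step you yourself flag as the ``combinatorial heart'' --- running the contradiction through the four combinations of obstructions --- fails for the two mixed combinations, and in fact the claim as literally stated is \emph{false} for arbitrary $A,B,C$. Take $V=\{1,2,3,4,5\}$, $A=\{1,2,3\}$, $B=\{1,3,4,5\}$, $C=\{1,2\}$. Here $C\subseteq A$ (so $A$ does not cross $C$) and $B\cup C=V$ (so $B$ does not cross $C$), yet $A\cap B=\{1,3\}$ crosses $C$, since $\{1\}$, $\{3\}$, $\{2\}$, $\{4,5\}$ are all nonempty. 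Thus $\nu(\{A,B\},\{C\})=0<1=\nu(\{A\cap B,A\cup B\},\{C\})$. Your witnesses $z\in C\setminus(A\cap B)$ and $w\in V\setminus((A\cap B)\cup C)$ do rule out the two symmetric combinations ($C\subseteq A$ and $C\subseteq B$; or $A\cup C=V$ and $B\cup C=V$), but they give no contradiction when $C\subseteq A$ while $B\cup C=V$, as the example shows.

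The repair is to use a hypothesis that the claim's statement omits but that holds in its only application (Claim~\ref{cl:nubound}): the uncrossing step is performed only on pairs $A,B$ that \emph{cross}. Under that hypothesis the mixed combination is impossible: $B\cup C=V$ gives $\bar B\subseteq C$, and $C\subseteq A$ then gives $\bar B\subseteq A$, i.e.\ $V\setminus(A\cup B)=\emptyset$, contradicting that $A$ and $B$ cross. (Note that in my counterexample $A\cup B=V$, so $A$ and $B$ indeed do not cross.) With that addition, your case analysis --- trivial case, the two-witness/four-witness argument when both $A\cap B$ and $A\cup B$ cross $C$, and the complementation symmetry reducing the one-sided case to ``$A\cap B$ crosses $C$'' --- goes through. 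So either add ``$A$ and $B$ crossing'' to the hypotheses and invoke it exactly at this point, or accept that the inequality can fail by $1$ in the degenerate non-crossing situation; as written, your proposal asserts a conclusion that is not true.
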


\subsection{Extreme Point Solution}\label{sec:extreme}

The iterative algorithm removes every edge $e$ from $E^*$ with $x^*_e=0$; consequently, we may assume that $E^*=\supp(x^*)$.
 By a slight abuse of notation, for any set of edges $F\subseteq E^*$, we will use $F$ to denote the indicator vector in $\RR^{E^*}$ of the edge set $F$. Thus, for any partition or co-partition $\P$, $\chi_{E^*}(\P)$ will also denote the indicator vector for the edge set $\chi_{E^*}(\P)$. Similarly, for a subset $S\subset V$, $\delta(S)$ will also denote the indicator vector for the corresponding edge set.  For a family of partitions and co-partitions $\aP$, we let span$(\aP)$ denote the vector space generated by the vectors $\{\chi_{E^*}(\P):\P\in \aP\}$.
An important component in the proof of Theorem~\ref{thm:main} is the following characterization of extreme point solutions.

\begin{theorem}\label{thm:charac}
Let $x^*$ be an extreme point solution to (\ref{lp}), and assume $E^*=\supp(x^*)$ and $x^*_e<1$ for each $e\in E^*$. Then there exists a family of tight partitions and co-partitions $\aP$ such that
\begin{enumerate}
\item The family $\aP$ is strongly cross-free.
\item The vectors $\{\chi_{E^*}(\P):\P\in \aP\}$ are linearly independent.
\item  $|E^*|= |\aP|$.
\end{enumerate}
\end{theorem}

Note that requiring only properties 2 and 3 is equivalent to the definition of an extreme point solution. The challenge is to enforce the strongly cross-free structure. This will rely on two uncrossing arguments, one standard and one that will appear in Section~\ref{sec:uncrossing}. We state the main lemma that we will prove in Section~\ref{sec:uncrossing}; this will allow us to complete the proof of Theorem~\ref{thm:charac}. We start by giving a few definitions and a technical claim first.

Consider a feasible solution $x:E^*\rightarrow \RR_+$ to (\ref{lp}), and a collection $\F$ of subsets of $V$. Let
$\Delta_{x}({\F})\defeq\sum_{S\in{\F}}x(\delta(S))$ denote the sum of $x$-degrees of sets in $\F$. Similarly, let $\Delta_{E}({\F})\defeq\sum_{S\in{\F}}d_E(S)$.
Note that if $\P$ is a partition or a co-partition, then $\Delta_{x}({\P})=2x(\chi_{E^*}(\P))$ and $\Delta_{E}({\P})=2e_G(\P)$.
Using this notation, let us define
\begin{align*}
\Psi_{x}(\F)&\defeq\frac 12\Delta_{x}({\F})- \sum_{S\in\F}f(S)+\frac12 \Delta_{E}({\F})\\
&=\sum_{S\in\F}\left( \frac12x(\delta(S))-f(S)+\frac12d_E(S)\right).
\end{align*}

\begin{claim}\label{cl:tight-fam}
For every feasible solution $x$ to (\ref{lp}) and every cross-free regular family $\F$, $\Psi_{x}(\F)\ge 0$ holds.
If $\Psi_{x}(\F)=0$, then for any decomposition of $\F$ into a disjoint union of partitions and co-partitions, all these (co-)partitions must be tight.
\end{claim}
\begin{proof}
By Lemma~\ref{lem:decompose}, $\F$ can always be decomposed in to a disjoint union of partitions and co-partitions; let $\F=\cup_{i=1}^t \P^i$ denote such a decomposition.
By the feasibility of $x$ we have $x(\chi_{E^*}(\P^i))\ge f(\P^i)-e_G(\P^i)$
for every $i=1,\ldots t$. Then $\Psi_{x}(\F)\ge 0$  follows by summing up these inequalities.
Further, in the case of equality we must have had equality for all $\P^i$'s, that is, all of them are tight.
\end{proof}

We now state the main lemma that will be derived by uncrossing weakly cross-free partitions/co-partitions.
\begin{lemma}\label{lem:weak-uncross}
For any $\P$ and $\Q$ weakly cross-free partitions or co-partitions, there exists a set of partitions and co-partitions $\Upsilon(\P,\Q)$ with the following properties. let  $\R\in
\Upsilon(\P,\Q)$ be arbitrarily chosen.
\begin{enumerate}[(i)]
\item
If $\P$ and $\Q$ are both tight for some feasible solution
$x$ to (\ref{lp}), then $\R$ is also tight.
\item $\R$ is strongly cross-free with both $\P$ and $\Q$.
\item $\chi_{E^*}(\P)+\chi_{E^*}(\Q)=\sum\{\chi_{E^*}(\R): \R\in\Upsilon(\P,\Q)\}$.
\item If a partition or co-partition $\S$ is strongly cross-free with both
$\P$ and $\Q$, then it is also strongly cross-free with $\R$.
\end{enumerate}
\end{lemma}

We will give the definition of $\Upsilon(.,.)$ and proof of Lemma~\ref{lem:weak-uncross} in Section~\ref{sec:uncrossing} but the statement of the lemma is enough to give the proof of Theorem~\ref{thm:charac}.

{\em Proof of Theorem~\ref{thm:charac}.}
Let $\aS$ denote the family of all tight partitions and co-partitions, and
consider the linear space $\mbox{span}({\aS})$ generated by their characteristic vectors.
Consider a strongly cross-free family ${\aP}\subseteq{\aS}$ such that the vectors $\{\chi(\P):\P\in \aP\}$ are linearly independent; assume $\aP$ is maximal for containment.
We show that $\mbox{span}(\aP)=\mbox{span}({\aS})$, that is, $\aP$
generates the entire linear space of tight partitions and
co-partitions. Property 3 then follows since the dimension of
$\mbox{span}(\aS)$ is equal to $|E^*|$, as $x^*$ is an extremal point,
and $\supp(x^*)=E^*$ is enforced by the algorithm.
For a partition or co-partition $\Q$, let  $\mu(\Q,\aP)$ denote the number of
partitions and co-partitions in the family $\aP$ that are weakly
cross-free with $\Q$.

Recall that for two sets $\F$ and $\H$, $\F\dunion \H$ denotes the multi set arising as
the disjoint union -  that is, elements occurring in both sets are
included with multiplicity.
For a contradiction, assume there exists a tight partition or co-partition $\Q$ with
$\chi(\Q)\notin \mbox{span}(\aP)$. Pick $\Q$ such that
$\nu(\Q,\dunion\aP)$ is minimal, and subject to this, $\mu(\Q,\aP)$ is
minimal.
We show that both quantities equal 0. This gives a contradiction as it
means that $\Q$ is strongly cross-free with all members of $\aP$ and
thereby contradicts the maximal choice of $\aP$.

First, assume $\nu(\Q,\dunion\aP)>0$. In particular, there exists a
$\P\in\aP$ with $\nu(\Q,\P)>0$, that is, $\P$ and $ \Q$ have some
crossing parts. Set $\F\defeq\P\dunion \Q$.
Let us transform $\F$ to a family $\F'$ by the following
uncrossing method: whenever $\F$ contains two crossing sets $A$ and
$B$, remove them and replace them by $A\cup B$ and $A\cap B$ ($\F$ is
also a multi set, i.e. it may contain multiple copies of the same set).

\begin{claim}\label{cl:uncross-finite}
The uncrossing method is finite and delivers a regular cross-free
$\F'$ with $\Psi_{x^*} (\F')=0$, and $\sum_{S\in \F}\delta(S)=\sum_{S\in\F'}\delta(S)$.
\end{claim}
\begin{proof}
The procedure must be finite since $\sum_{A\in\F} |A|^2$ strictly
increases in every step.
Initially, $\F$ is regular, being the union of two
partitions/co-partitions, and $\Psi {x^*} (\F)=0$ since both $\P$ and $\Q$
were tight. For simplicity of notation, let $\F$ and $\F'$ denote in
the following the family before and after replacing two crossing sets
$A$ and $B$ by $A\cup B$ and $A\cap B$ (as opposed to the initial and
the final family).

We prove that whenever $\F$ satisfies the above properties, then so
does $\F'$. First, regularity is maintained since
every node $s\in V$ is covered by $\{A\cup B,A\cap B\}$ equal number of
times as by $\{A,B\}$. Let us prove $\Psi_{x^*}(\F)\ge \Psi_{x^*} (\F')$.

Observe that for any two crossing sets $S$ and $T$ we have,
\begin{align*}
\frac12\delta_E(S)+\frac12\delta_E(T)&\geq \frac12\delta_E(S\cap T)+\frac12\delta_E(S\cup T) + d_E(S,T)\\
f(S)+ f(T)&\leq f(S\cap T) +f(S\cup T) +  d_E(S,T)
\end{align*}
where the first inequality follows simply from counting edges on both sides of the inequality and the second inequality follows since $f$ is $G$-supermodular. Subtracting the first inequality from the second implies that
 the function $f(S)-\frac12\delta_E(S)$ is crossing supermodular. Moreover, the function $x^*(\delta (S))$ is
submodular and
consequently,
$\frac12\delta_{x^*} (S)-f(S)+\frac12\delta_E(S)$ is crossing
submodular. Thus we have $\Psi_{x^*} (\F)\ge \Psi_{x^*} (\F')$; then $\Psi_{x^*} (\F')=0$
follows by Claim~\ref{cl:tight-fam}. Further, the equality also yields
$\delta(A)+\delta(B)=\delta(A\cup B)+\delta(A\cap B)$, and
therefore
 $\sum_{S\in \F}\delta(S)=\sum_{S\in\F'}\delta(S)$.
\end{proof}

Let us apply Lemma~\ref{lem:decompose} for the regular cross-free
family $\F'$ resulting by the above procedure, decomposing it into a
set of partitions and co-partitions
$\R^1,\R^2,\ldots,\R^\ell$. Claim~\ref{cl:tight-fam}  together with
$\Psi_{x^*} (\F')=0$ implies that all $\R^i$'s are tight. Further, we
have $\sum_{S\in\F}\delta(S)=2(\chi(\P)+\chi(\Q))$, and
$\sum_{S\in\F'}\delta(S)=2\sum_{i=1}^\ell\chi(\R^i)$. By the
previous claim,
\[
\chi(\P)+\chi(\Q)=\sum_{i=1}^\ell\chi(\R^i).
\]
By the assumption $\chi(\Q)\notin \mbox{span}(\aP)$, there exists
$1\le r\le \ell$ with $\chi(\R^r)\notin \mbox{span}(\aP)$.
The next claim gives a contradiction to the extremal choice of $\Q$.
\begin{claim}\label{cl:nubound}
$\nu(\Q,\dunion\aP)>\nu(\R^r,\dunion\aP)$
\end{claim}

\begin{proof}
First, observe that
$\nu(\F,\dunion\aP)=\nu(\Q,\dunion\aP)$, since $\F=\P\cup\Q$ and
$\dunion\aP$ is cross-free. Using Claim~\ref{cl:uncross-decrease},
$\nu(\F,\dunion\aP)$ cannot increase during the uncrossing procedure.
But in the very first step, it must strictly decrease. Indeed, the
first step uncrosses some part $P_i\in \P$ with some $Q_{j}\in\Q$.
Now $\nu(\{P_i,Q_{j}\},\{P_{i}\})=1$ but $\nu(\{P_i\cup
Q_{j},P_i\cap Q_{j}\},\{P_{i}\})=0$, and therefore $\nu(\F,\dunion
\aP)$ must strictly decrease in the first step.
Hence for the final $\F'$,
\begin{align*}
\nu(\R^r,\dunion\aP)&\le
\nu(\dunion_{i=1}^\ell\R^i,\dunion\aP)=\nu(\F',\dunion\aP)\\
&<\nu(\F,\dunion \aP)=\nu(\Q,\dunion\aP).
\end{align*}
\end{proof}

This completes the proof of $\nu(\Q,\dunion\aP)=0$. Next, assume
$\P\cup \Q$ is cross-free for every $\P\in \aP$, nevertheless,
$\mu(\Q,\aP)>0$, that is, $\P$ and $\Q$ are weakly cross-free for some
$\P\in\aP$.
Consider now the set $\Upsilon(\P,\Q)=\{\R^1,\R^2,\ldots,\R^\ell\}$.
Lemma~\ref{lem:weak-uncross}
 shows that for every $\R^i$,
$\mu(\R^i,\aP)\le   \mu(\Q,\aP)$ (note that $\P$ is
strongly cross-free with all members of $\aP$). Moreover, the
inequality must be strict, since $\Q$ and $\P$ are weakly cross-free,
but $\R^i$ and $\P$ are strongly cross-free.
Furthermore, $\nu(\R^i,\dunion\aP)=0$. This is because $\R^i$ consists
of sets in $\P\cup\Q$, and $\nu(\Q,\dunion\aP)=0$ implies that
$(\cup\aP)\cup \Q$ is a cross-free family.
Again, we have
\[
\chi(\P)+\chi(\Q)=\sum_{i=1}^\ell\chi(\R^i),
\]
by Lemma~\ref{lem:weak-uncross}(iii),
and therefore $\chi(\R^r)\notin \mbox{span}(\aP)$ for some $1\le
r\le \ell$.  Then
 $\mu(\R^r,\aP)<  \mu(\Q,\aP)$ contradicts the choice of $\Q$.
This completes the proof of Theorem~\ref{thm:charac}. \qquad \endproof

It now remains to prove Lemma~\ref{lem:weak-uncross} that we do in the next section.

\subsection{Uncrossing and Strongly Cross-free Families}\label{sec:uncrossing}

We first define the uncrossing operation $\Upsilon(.,.)$ that generalizes the intersection and union operation over sets to partitions and co-partitions. As an important difference, whereas uncrossing two sets results in two other sets, uncrossing a partition and a co-partition may lead to a collection of more than two partitions and co-partition.

 Assume $\P$ and $\Q$ are weakly cross-free partitions or co-partitions with  associated sub-partitions $\tilde\P$ and $\tilde\Q$.
By Claim~\ref{cl:sup-lam}, $\supp(\tilde\P)$ and $\supp(\tilde\Q)$ are
either disjoint, or one is a subset of the other. If they are disjoint,
then $\P$ and $\Q$ are strongly cross-free by definition; hence the
second alternative must hold. Further, if $\P$ is a partition and $\Q$
is a co-partition, then $\P\cup\bar\Q$ must be a laminar family. Indeed, if there were parts $P\in\P$, $\bar Q\in\bar\Q$ with $P\cup\bar Q=V$, then $\P$ and $\Q$ would be strongly cross-free by Claim~\ref{cl:strongly-cf}.
We are ready to define
 the set $\Upsilon(\P,\Q)$ that corresponds to the ``uncrossing'' of the weakly cross-free pair $\P$ and $\Q$.

\begin{itemize}
 \item {\em If $\P$ and $\Q$ are both partitions or both co-partitions} (see Figure~\ref{fig:uncrossing1}). Without loss of generality, let us assume $\supp(\tilde\P)\subseteq \supp(\tilde\Q)$.
 The set family $\tilde\P\dunion\tilde\Q$ is laminar and covers every
 element of $\supp(\tilde\P)$ exactly twice, and every element of $\supp(\tilde\Q)\setminus \supp(\tilde\P)$ exactly once. It is easy to see that it can be decomposed into two sub-partitions $\F$ and $\F'$ such that $\supp(\F)=\supp(\tilde\P)$, $\supp(\F')=\supp(\tilde\Q)$,
and $\F'$ dominates $\F$.
\begin{itemize}
\item If both $\P$ and $\Q$ are partitions, then
let us define
 $\P\wedge \Q:=\{P_1\}\cup \F$, and $\P\vee\Q:=\{Q_1\}\cup \F'$.
\item If both $\P$ and $\Q$ are co-partitions, then 
let us define $\P\wedge \Q:=\{P_1\}\cup \bar\F$, and $\P\vee\Q:=\{Q_1\}\cup \bar\F'$.
\end{itemize}
 In both cases, we let $\Upsilon(\P,\Q):=\{\P\wedge\Q,\P\vee\Q\}$.
Note that if both $\P$ and $\Q$ are partitions, then both $\P\wedge\Q$ and $\P\vee\Q$ are partitions; whereas if both $\P$ and $\Q$ are co-partitions, then they are both co-partitions as well.
\item {\em If $\P$ is a partition and $\Q$ is a co-partition} (see Figure~\ref{fig:uncrossing2}). Let
  $\P=(P_1,\ldots,P_p)$ and $\Q=(Q_1,\ldots,Q_q)$, with complement
  $\bar \Q$. Let $\F$ denote the maximal members of the laminar family
  $\P\cup\bar\Q$ (note that here we have $\cup$ instead of
  $\dunion$; the laminarity of $\F$ was verified above).
Let $\Upsilon(\P,\Q)$ denote the collection of the following $|\F|$
partitions and co-partitions.
\begin{itemize}
\item
For every  $\bar Q_j\in \F$ that is not a part of $\P$, define the
partition $\R$ consisting of $Q_j$ and the sets $P_i$ such that
$P_i\subseteq \bar Q_j$.
\item For every $P_i\in \F$ that is not a part
of $\bar \Q$, define the co-partition $\R$ consisting of $P_i$
and the sets $Q_j$ such that $\bar Q_j\subseteq P_i$.
\item For every $S\in\F$ that is both a part of $\P$ and $\bar \Q$,
  define
the partition $\R=\{S,V\setminus S\}$.
\end{itemize}
Since $\P$ and $\Q$ are weakly cross-free, the set $\F$ has a single member that contains the root \r. Let us call the corresponding partition or co-partition the {\em special} member of
$\Upsilon(\P,\Q)$.
\end{itemize}
In Figure~\ref{fig:uncrossing2}, $\R_1$, $\R_2$ and $\R_3$
illustrate the three respective cases above; the special member of
$\Upsilon(\P,\Q)$ is $\R_1$.

\medskip
We are ready to prove Lemma~\ref{lem:weak-uncross}.

{\em Proof of Lemma~\ref{lem:weak-uncross}.  }
For {\em(i)}, observe that  $\P\dunion\Q=\dunion\{\R': \R'\in
\Upsilon(\P,\Q)\}$ in each of the cases.
Then the last part of Claim~\ref{cl:tight-fam} together with $\Psi_x(\P\dunion\Q)=0$ implies that all
members of $\Upsilon(\P,\Q)$ are tight.
Parts {\em(ii)} and {\em(iii)} are straightforward to check.

For {\em(iv)}, first observe that $\S\cup \R$ is always cross-free,
since $\R$ consists of certain parts of $\P\cup\Q$. We verify the additional properties of strongly cross-freeness in the different cases.
\begin{enumerate}
\item {\em $\P$, $\Q$ are both partitions or both co-partitions.} Again by Claim~\ref{cl:sup-lam}, $\supp(\tilde\P)$ and $\supp(\tilde\Q)$ are either disjoint or one is a subset of the other; and they cannot be disjoint since we assumed that $\P$ and $\Q$ are weakly cross-free. Without loss of generality, let us assume $\supp(\tilde\P)\subseteq \supp(\tilde\Q)$ as in Figure~\ref{fig:uncrossing1}.
\begin{enumerate}
\item $\R=\P\wedge \Q$. By definition, $\supp(\tilde\R)=\supp(\tilde\P)$; also note that both $\tilde\P$ and $\tilde\Q$ dominate $\tilde\R$.
 We are done if $\tilde\S$ and $\tilde\P$ are disjoint; otherwise, one of them dominates the other.
\begin{claim}
Assume $\supp(\tilde\S)\cap\supp(\tilde\P)\neq\emptyset$.
If $\tilde\S$ dominates at least one of $\tilde\P$ or $\tilde\Q$, then it also dominates $\tilde\R$. Otherwise, $\tilde\R$ dominates $\tilde\S$.
\end{claim}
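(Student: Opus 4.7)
The plan is to prove both halves of the claim by first identifying a single structural observation: by the construction of $\R = \P \wedge \Q$, the subpartition $\tilde\R = \F$ is dominated by both $\tilde\P$ and $\tilde\Q$. Indeed, every part of $\F$ is chosen from the laminar multiset $\tilde\P \dunion \tilde\Q$ as an ``inner'' member, meaning it is literally a part of one of $\tilde\P, \tilde\Q$ and is contained in some part of the other. With this in hand, the first half of the claim follows immediately from transitivity of domination (Claim~\ref{cl:trans}): if $\tilde\S$ dominates $\tilde\P$ and $\tilde\P$ dominates $\tilde\R$, then $\tilde\S$ dominates $\tilde\R$, and the analogous chain goes through with $\tilde\Q$ in place of $\tilde\P$.

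For the ``otherwise'' direction I would first use strong cross-freeness to promote the hypothesis into the symmetric statement that both $\tilde\P$ and $\tilde\Q$ dominate $\tilde\S$. Since $\S$ and $\P$ are strongly cross-free and in the present case either both partitions or both co-partitions, $\tilde\S$ and $\tilde\P$ are either disjoint or one dominates the other; as their supports intersect by hypothesis, one must dominate the other, and since $\tilde\S$ does not dominate $\tilde\P$ we conclude that $\tilde\P$ dominates $\tilde\S$. In particular $\supp(\tilde\S) \subseteq \supp(\tilde\P) \subseteq \supp(\tilde\Q)$, so the same argument applied to the pair $\S, \Q$ yields that $\tilde\Q$ dominates $\tilde\S$ as well.

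To finish the ``otherwise'' case, I would pick an arbitrary $S \in \tilde\S$ and choose $P \in \tilde\P$, $Q \in \tilde\Q$ with $S \subseteq P$ and $S \subseteq Q$. Because $P \cap Q \supseteq S$ is nonempty and $\tilde\P \cup \tilde\Q$ is laminar, $P$ and $Q$ are nested (or equal). The smaller of the two (or, in the equality case, one of the two copies in the multiset $\tilde\P \dunion \tilde\Q$) is precisely an ``inner'' member of the decomposition and therefore lies in $\F = \tilde\R$; since it contains $S$, this exhibits a part of $\tilde\R$ containing $S$, and $S$ was arbitrary, so $\tilde\R$ dominates $\tilde\S$.

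The step I expect to be the main obstacle is the last one, where one must extract from the abstract description of the decomposition $\tilde\P \dunion \tilde\Q = \F \dunion \F'$ the concrete fact that, for every pair $P \in \tilde\P$ and $Q \in \tilde\Q$ with $P \cap Q \neq \emptyset$, at least one copy of the smaller of $P$ and $Q$ belongs to $\F$. This requires a small case analysis on whether $P \subsetneq Q$, $Q \subsetneq P$, or $P = Q$ as sets, and on which copy of an equal pair is assigned to $\F$ versus $\F'$; once that bookkeeping is settled, everything else reduces to transitivity via Claim~\ref{cl:trans}.
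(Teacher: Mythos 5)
Your proof is correct and follows essentially the same route as the paper's: the first half is the same appeal to transitivity via the observation that both $\tilde\P$ and $\tilde\Q$ dominate $\tilde\R=\F$, and the second half is the same argument of upgrading the hypothesis to ``both $\tilde\P$ and $\tilde\Q$ dominate $\tilde\S$'' via strong cross-freeness and then using laminarity of $\tilde\P\cup\tilde\Q$ to place the smaller of $P$ and $Q$ in $\F$. The bookkeeping you flag as the main obstacle is handled implicitly in the paper (``the smaller one among the two sets is contained in $\tilde\R$''), and your resolution of it — that the part of $\F$ containing a given element must be contained in the part of $\F'$ containing it, since $\F'$ dominates $\F$ — is exactly what makes that assertion true.
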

\begin{proof}
If $\tilde\S$ dominates at least one of $\tilde\P$ or $\tilde\Q$, then it also dominates $\tilde\R$ by the transitivity of domination (Claim~\ref{cl:trans}).
Assume this is not the case, and hence  both $\tilde\P$ and $\tilde\Q$ dominate $\tilde\S$. (Notice that $\tilde\S$ and $\tilde\Q$ cannot be disjoint because of $\supp(\tilde\P)\subseteq\supp(\tilde\Q)$).
We show that $\tilde\R$ dominates $\tilde\S$. Let $S\in\tilde\S$ be an arbitrary part; by definition, there exists  parts $P\in\tilde\P$ and $Q\in\tilde\Q$ with $S\subseteq P$, $S\subseteq Q$. Since $\tilde\P\cup\tilde\Q$ is laminar (Claim~\ref{cl:sup-lam}), we have either $P\subseteq Q$ or $Q\subseteq P$. The smaller one among the two sets is contained in $\tilde\R$, proving the claim.
\end{proof}

To complete the proof of strong cross-freeness, we have to consider the case when one of $\S$ and $\R$ is a partition and the other is a co-partition.
Assume both $\P$ and $\Q$ and thus $\R$ are partitions, and $\S$ is a co-partition.
Now $\tilde\S$ is disjoint from $\tilde\P$, or one of them strongly dominates the other; the same applies for $\tilde\S$ and $\tilde\Q$. We are done if $\tilde\S$ is disjoint from either of them.
If $\tilde\S$ strongly dominates $\tilde\P$ or $\tilde\Q$, then the second part of Claim~\ref{cl:trans} implies that it also strongly dominates $\tilde\R$.
If $\tilde\S$ is strongly dominated by both $\tilde\P$ and $\tilde\Q$, then there must be parts $P\in \tilde\P$, $Q\in \tilde\Q$ with $\supp(\tilde\S)\subseteq P\cap Q$. As in the proof of the above claim, $P\subseteq Q$ or $Q\subseteq P$, and the smaller one must be a part of $\tilde\R$, and therefore $\tilde\R$ strongly dominates $\tilde\S$.
The same arguments work whenever $\P$, $\Q$ and $\R$ are co-partitions, and $\S$ is a partition.
\item $\R=\P\vee \Q$. A similar argument is applicable. Now
  $\supp(\tilde\R)=\supp(\tilde\Q)\supseteq \supp(\tilde\P)$, and
  $\tilde\R$ dominates both $\tilde\P$ and $\tilde\Q$. We are done if
  $\tilde\Q$ and $\tilde\S$ are disjoint. As in the above proof, one
  can verify that if either of $\tilde\P$ and $\tilde\Q$ dominates
  $\tilde\S$, then $\tilde\R$ also dominates $\tilde\S$ by
  transitivity. If $\tilde\S$ dominates both $\tilde\P$ and
  $\tilde\Q$, then it also must dominate $\tilde\R$. There is one more
  case however, when  $\supp(\tilde\S)\subseteq
  \supp(\tilde\Q)\setminus \supp(\tilde\P)$. But in this case,
  $\tilde\Q$ must dominate $\tilde\S$, a case we have already covered.

Consider the case when one of $\S$ and $\R$ is a partition and the other is
a co-partition.
 If $\tilde\Q$ and $\tilde\S$
are disjoint, then $\tilde\R$ and $\tilde\S$ are also disjoint since $\supp(\tilde\R)=\supp(\tilde\Q)$.
If $\tilde\Q$ strongly
dominates $\tilde\S$, then $\tilde\R$ also strongly dominates
$\tilde\S$ using Claim~\ref{cl:trans}. Finally, if $\tilde\S$ strongly
dominates $\tilde\Q$, then for some part $S\in\tilde\S$ we have
$\supp(\tilde\Q)\subseteq S$, implying that $\tilde\S$ strongly
dominates $\tilde\R$ again by  $\supp(\tilde\R)=\supp(\tilde\Q)$.
\end{enumerate}

\item {\em $\P$ is a partition and $\Q$ is a co-partition.} We refer to Figure~\ref{fig:uncrossing2} for an example. By
  symmetry, we may assume that $\S$ is also a partition; the proof
  extends to the case when $\S$ is a co-partition by swapping the
  roles of $\P$ and
  $\Q$.
First, let us examine the case when $\R$ is a {\em co-partition}, that
is, for some $P_j\in\P$,  such that $P_j$ is a maximal member of
$\P\cup\bar\Q$, we have $\R=\{P_j\}\cup \{Q\in \Q:\bar Q\subseteq
P_j\}$. According to Claim~\ref{cl:strongly-cf}, there exists parts
$Q\in \Q$ and $S\in\S$ with $Q\subseteq S$. If $\bar Q\subseteq P_j$,
then $Q$ is also a part of
$\R$, and therefore $\S$ and $\R$ are strongly cross-free again by
Claim~\ref{cl:strongly-cf}.
Otherwise, $\bar Q\cap P_j=\emptyset$, and thus $P_j\subseteq
Q\subseteq S$, and hence $P_j\subseteq S$ verifies that $\S$ and $\R$
are strongly cross-free.

For the rest of the proof, we assume that $\R$ is a partition, that
is, for some $Q_j\in\Q$,  such that $\bar Q_j$ is a maximal member of
$\P\cup\bar\Q$, we have $\R=\{Q_j\}\cup \{P\in \P:P\subseteq
\bar{Q_j}\}$. We distinguish two cases, depending on whether $\R$ is
the special member of $\Upsilon(\P,\Q)$. Recall that the special member of $\Upsilon(\P,\Q)$ is defined by the maximal set in $\P\cup \Q$ that contains the root $r$.

\begin{enumerate}
\item{\em  $\R$ is {\em not} the special member of $\Upsilon(\P,\Q)$,
  that is, $j>1$, or equivalently, $r\notin\bar{Q_j}$.} It is easy to see that both $\P$ and $\Q$ then dominate $\R$.
Note that $\supp(\tilde\R)=\bar{Q_j}$. Since both $\S$ and $\R$ are partitions, it is sufficient to verify that either $\tilde\S$ and $\tilde\R$ are disjoint
or one dominates the other. Consider the sub-partitions $\tilde\S$ and $\tilde\Q$.
If they are disjoint, then so are $\tilde\S$ and $\tilde\R$. If $\tilde\S$ strongly dominates $\tilde \Q$, then it also dominates $\tilde \R$ by transitivity. The remaining case is when $\tilde\Q$ strongly dominates $\tilde\S$, that is, $\supp(\tilde\S)\subseteq Q'$ for some $Q'\in\tilde\Q$. If $Q'\neq \bar{Q_j}$ then $\tilde\S$ and $\tilde\R$ are disjoint; hence we may assume that $Q'=\bar{Q_j}$.

Consider now the relation between $\tilde\S$ and $\tilde\P$. Again, if $\tilde\S$ dominates $\tilde\P$, then it also dominates $\tilde\R$ by transitivity and hence we are done; also $\tilde\S$ and $\tilde\P$ cannot be disjoint, since $\bar{Q_j}\subseteq\supp(\tilde\P)$. Hence $\tilde\P$ dominates $\tilde\S$, that is, for every $S\in\tilde\S$, there exists a part $P\in\tilde\P$ with $S\subseteq P$. Because of
$\supp(\tilde\S)\subseteq\bar{Q_j}$ we must have $P\subseteq\bar{Q_j}$ and thus $P$ is also a part of $\R$, showing that $\R$ dominates $\S$.

\item{\em  $\R$ is the special member of $\Upsilon(\P,\Q)$,
  that is, $j=1$, or equivalently, $\r\in\bar{Q_j}$.} It is easy to see that $\R$ dominates both $\P$ and $\Q$.
Note that $P_1\in\R$ and $\supp(\tilde\R)=\bar P_1=\supp(\tilde\P)\supseteq \supp(\tilde\Q)$.
Let us consider $\tilde\S$ and $\tilde\P$. If they are disjoint, then so are $\tilde\S$ and $\tilde\R$. If $\tilde\P$ dominates $\tilde\S$, then $\tilde\R$ also dominates $\tilde\S$ by transitivity. Hence we may assume that $\tilde\S$ dominates $\tilde\P$.

This implies that $\supp(\tilde\Q)\subseteq\supp(\tilde\P)\subseteq\supp(\tilde\S)$ and therefore $\tilde\Q$ and $\tilde\S$ cannot be disjoint. If $\tilde\Q$ strongly dominates $\tilde\S$, then we would again have that $\tilde\R$ dominates $\tilde\S$.
Therefore $\tilde\S$ strongly dominates $\tilde\Q$, that is, $Q_1=\supp(\tilde\Q)\subseteq S$ for some $S\in\tilde\S$. Together with the fact that $\tilde\S$ dominates $\tilde\P$, this implies that $\tilde\S$  dominates $\tilde\R$. Indeed, all parts of $\tilde\R$ different from $Q_1$ are also parts of $\tilde\P$ and are hence subsets of some parts of $\tilde\S.$
\end{enumerate}
\end{enumerate}

This completes the proof of Lemma~\ref{lem:weak-uncross}. $\qquad \endproof$

\subsection{The partial order}\label{sec:partial}
The following notion of a partial order and the properties we derive will be crucial in the proof
of Theorem~\ref{thm:frac-value}.
Consider a strongly cross-free family $\PP$ of partitions and co-partitions.
We can naturally define a partial ordering $(\PP,\preceq)$ as follows. For $\P,\Q\in\PP$, let $\P\preceq\Q$ if $\tilde\Q$ dominates $\tilde\P$. Claim~\ref{cl:trans} immediately implies that $\preceq$ is a partial order. Since $\PP$ is strongly cross-free, if $\P$ and $\Q$ are incomparable then $\tilde\P$ and $\tilde\Q$ are disjoint.
Note that if one of $\P$ and $\Q$ is a partition and the other is a co-partition, then
$\P\preceq\Q$ if and only if $\tilde\Q$ strongly dominates $\tilde\P$.

Let us say that a partition or co-partition $\P$ {\em properly contains} a set
$S\subseteq V$ if $S\subseteq P$ for some part $P$ of the associated
sub-partition $\tilde\P$.

\begin{lemma}\label{lem:proper-contain}
Let $S\subset V$ and let $\P$ and $\Q$ be two strongly cross-free partitions/co-partitions properly containing $S$. Then either $\P\preceq\Q$ or $\Q\preceq\P$. Moreover, if $\P$ properly contains $S$, then every $\Q$ with
$\P\preceq\Q$ also properly contains $S$.
\end{lemma}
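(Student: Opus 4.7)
The plan is to reduce both assertions to a direct unpacking of the definitions of \emph{strongly cross-free} and \emph{properly contains}, together with the partial order $\preceq$ defined via domination of the associated subpartitions. Throughout I assume $S\neq\emptyset$, which is the only nontrivial case; otherwise proper containment is vacuous whenever $\tilde\P$ has any part at all.

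For the first assertion, I would start by translating the hypothesis: that $\P$ properly contains $S$ means there exists a part $P\in\tilde\P$ with $S\subseteq P$, and similarly there exists $Q\in\tilde\Q$ with $S\subseteq Q$. Thus $\supp(\tilde\P)\cap\supp(\tilde\Q)\supseteq S\neq\emptyset$, so the supports of $\tilde\P$ and $\tilde\Q$ are not disjoint. Now I invoke the strongly cross-free property through a small case distinction. If $\P$ and $\Q$ are both partitions or both co-partitions, the definition tells us $\tilde\P$ and $\tilde\Q$ are either disjoint or one dominates the other; non-disjointness forces the latter. If one of $\P,\Q$ is a partition and the other a co-partition, the definition instead forces one of $\tilde\P,\tilde\Q$ to strongly dominate the other (which in particular implies ordinary domination). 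In every case we obtain that one of $\tilde\P,\tilde\Q$ dominates the other, which by definition of $\preceq$ is exactly $\P\preceq\Q$ or $\Q\preceq\P$.

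For the second assertion, suppose $\P$ properly contains $S$ via some part $P\in\tilde\P$ and let $\Q$ be any element of $\PP$ with $\P\preceq\Q$, that is, $\tilde\Q$ dominates $\tilde\P$. By the very definition of domination, every part of $\tilde\P$ is contained in some part of $\tilde\Q$; applying this to $P$, there is $Q'\in\tilde\Q$ with $P\subseteq Q'$, hence $S\subseteq P\subseteq Q'$ and $\Q$ properly contains $S$.

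I do not anticipate a real obstacle here; the only mild subtlety is keeping the partition/co-partition cases and the ordinary-versus-strong domination alternatives straight. The first part is where one has to be explicit that both flavours of the strongly cross-free dichotomy end with the same outcome (a domination relation), and the second part is a single-line application of the definition of domination after the first part has fixed the order between $\tilde\P$ and $\tilde\Q$.
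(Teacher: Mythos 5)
Your proof is correct and follows essentially the same route as the paper's: non-disjointness of $\supp(\tilde\P)$ and $\supp(\tilde\Q)$ (both containing $S$) plus the strongly cross-free dichotomy forces a domination relation, and the second part is the one-line transitivity $S\subseteq P\subseteq Q'$. Your explicit case split on partition versus co-partition and the remark that $S\neq\emptyset$ is needed are minor elaborations the paper leaves implicit.
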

 \begin{proof}
If $\P$ and $\Q$ both properly contain $S$ then $\supp(\tilde\P)\cap\supp(\tilde\Q)\neq\emptyset$, and therefore $\P\preceq\Q$ or $\Q\preceq\P$ due to the strongly cross-free assumption.
For the second property, let $\P\in \PP$ be properly containing $S$ and let $\P\preceq \Q$. Then $S\subseteq P$ for some part $P\in\tilde\P$ by the proper containment,
and $P\subseteq Q$ for some $Q\in\tilde\Q$ by domination, thus $S\subseteq Q$, as required.
\end{proof}

An easy consequence is the following lemma, that enables tree representations of strongly cross-free families.
\begin{lemma}\label{lem:tree}
Let $\PP$ be a strongly cross-free family of partitions and co-partitions. Then there is a rooted forest $F=(\PP,E(F))$ such that $\Q$ is an ancestor of $\P$ in $F$ if and only if $\P\preceq\Q$.
\end{lemma}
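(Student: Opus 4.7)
The plan is to prove Lemma~\ref{lem:tree} by establishing the defining property of a forest partial order: for every $\P\in\PP$, the up-set $U(\P):=\{\Q\in\PP:\P\preceq\Q\}$ is a $\preceq$-chain. Once this is in hand, the forest $F$ is obtained by declaring each $\P$ to be a child of the unique minimal element of $U(\P)\setminus\{\P\}$ whenever that set is nonempty, and leaving each $\preceq$-maximal element as a root of its tree; finiteness of $\PP$ guarantees that the ``immediate successor in the chain'' is always well defined.

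The main work is in verifying the chain property of $U(\P)$, and this is where I would apply Lemma~\ref{lem:proper-contain}. For each $\P\in\PP$ with $\tilde\P\neq\emptyset$, I would pick an arbitrary part $S\in\tilde\P$; then $\P$ properly contains $S$ by definition. The second statement of Lemma~\ref{lem:proper-contain} immediately gives that every $\Q\in U(\P)$ also properly contains $S$, and the first statement then yields that any two such $\Q$'s are $\preceq$-comparable, so $U(\P)$ is totally ordered. The edge case $\tilde\P=\emptyset$ corresponds only to the trivial partition $\{V\}$, which contributes the zero characteristic vector and can be assumed absent from $\PP$ (this is the place where I rely implicitly on $\PP$ being the kind of family produced by Theorem~\ref{thm:charac}).

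With the chain property established, it remains to check that the forest $F$ defined above indeed realises the ancestor relation. By construction, iterating the parent operation from $\P$ produces exactly the elements of $U(\P)$ listed in increasing $\preceq$-order, so the ancestors of $\P$ in $F$ (including $\P$ itself) are precisely the members of $U(\P)$, which is what the lemma asks. I do not expect a genuine obstacle here: the entire argument reduces to a short invocation of Lemma~\ref{lem:proper-contain}, and the only bookkeeping point is antisymmetry of $\preceq$ on $\PP$, which follows because mutual domination of $\tilde\P$ and $\tilde\Q$ forces $\tilde\P=\tilde\Q$, and the conventions fixing the distinguished part $P_1$ (the one containing $\rt$ for a partition, or not containing $\rt$ for a co-partition) then force $\P=\Q$.
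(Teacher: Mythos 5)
Your proof is correct and follows essentially the same route as the paper: both establish via Lemma~\ref{lem:proper-contain} that the elements $\preceq$-above a given $\P$ form a chain (the paper applies the lemma to a singleton $S=\{v\}$ with $v\in\supp(\tilde\P)$, you to a whole part $S\in\tilde\P$ — an immaterial difference), and both then define each non-maximal node's parent as the $\preceq$-minimal element strictly above it. Your extra remarks on antisymmetry and the degenerate case $\tilde\P=\emptyset$ are fine bookkeeping that the paper leaves implicit.
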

\begin{proof}
Lemma~\ref{lem:proper-contain} implies that if $\P,\Q,\R\in \PP$ such that $\P\preceq\Q$ and $\P \preceq\R$ then either $\Q\preceq\R$ or $\R\preceq\Q$. Indeed, let $S=\{v\}$ for an arbitrary $v\in\supp(\tilde\P)$, and apply Lemma~\ref{lem:proper-contain} for $\Q$ and $\R$.
Now, we construct the forest $F$ as follows. We consider all the maximal partitions or co-partitions under the partial order $(\PP,\preceq)$ as roots. For every $\P\in \PP$ which is not maximal, we let its parent be $\Q\in \PP$  that is the $\preceq$-minimal partition/co-partition in $\PP$ such that $\P\preceq\Q$. The above claim implies that $\Q$ is uniquely defined and therefore, we obtain a forest.
\end{proof}

Our next lemma characterizes the relation between two strongly cross-free partitions or co-partitions.

\begin{lemma}\label{lem:type-1}
Assume that $\P\preceq\Q$ for $\P,\Q\in\PP$ both partitions or both co-partitions.
Then either $\tilde\Q$ strongly dominates $\tilde\P$, that is, $\supp(\tilde\P)\subseteq Q$ for some part $Q\in \tilde\Q$, or $\tilde\P$ contains a partition of every part of $\tilde\Q$ that intersects $\supp(\tilde\P)$.
\end{lemma}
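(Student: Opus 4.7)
The plan is to assume $\tilde\Q$ does not strongly dominate $\tilde\P$ and derive the second alternative. First I would observe that the claim reduces to a single purely set-theoretic statement: for every part $Q'\in\tilde\Q$ with $Q'\cap\supp(\tilde\P)\ne\emptyset$, we have $Q'\subseteq \supp(\tilde\P)$. Once this is established, the domination hypothesis forces every $P\in\tilde\P$ intersecting $Q'$ to be contained in $Q'$ (because $P$ lies in some part of $\tilde\Q$, and the parts of $\tilde\Q$ are pairwise disjoint), so such parts $P$ cover $Q'$ and thereby partition it.

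To establish the reduced statement, I would exploit that $\P\cup\Q$ is cross-free (from strong cross-freeness) and argue a contradiction by exhibiting a crossing pair. Split into the two cases of the lemma. In the \emph{partition case}, $\supp(\tilde\P)=V\setminus P_1$ and $\supp(\tilde\Q)=V\setminus Q_1$, with $Q_1\subseteq P_1$ by domination. Supposing $Q'\not\subseteq\supp(\tilde\P)$ gives $Q'\cap P_1\ne\emptyset$, while $Q'\cap\supp(\tilde\P)\ne\emptyset$ gives $Q'\setminus P_1\ne\emptyset$; furthermore $P_1\setminus Q'\supseteq Q_1\ne\emptyset$. If $V\setminus(P_1\cup Q')\ne\emptyset$ then $P_1$ and $Q'$ cross, contradicting cross-freeness; otherwise $\supp(\tilde\P)=V\setminus P_1\subseteq Q'$, which is exactly strong domination by $\tilde\Q$, contrary to our assumption.

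In the \emph{co-partition case}, write $Q'=\bar Q_j$ with $j\ge 2$. Here $\supp(\tilde\P)=P_1$ and domination gives $\bar Q_1\subseteq\bar P_1$. The subtlety is that $\bar Q_j\notin\P\cup\Q$, so I instead test crossing on the pair $P_1,Q_j\in\P\cup\Q$. Checking the four quadrants: $P_1\cap Q_j=P_1\setminus\bar Q_j$ is nonempty, otherwise $P_1\subseteq\bar Q_j$ would be strong domination; $P_1\setminus Q_j=P_1\cap\bar Q_j$ is nonempty since $Q'\cap\supp(\tilde\P)\ne\emptyset$; $V\setminus(P_1\cup Q_j)=\bar P_1\cap\bar Q_j$ is nonempty by our supposition $Q'\not\subseteq\supp(\tilde\P)$; finally $Q_j\setminus P_1=\bar P_1\setminus\bar Q_j\supseteq\bar Q_1$, which is nonempty because $\bar Q_1\ne\emptyset$ and $\bar Q_1\cap\bar Q_j=\emptyset$. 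All four cells are nonempty, so $P_1$ and $Q_j$ cross, contradicting cross-freeness of $\P\cup\Q$.

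The main obstacle I anticipate is precisely the co-partition case: the ``natural'' pair $P_1,\bar Q_j$ is not a pair of sets from $\P\cup\Q$, so one must remember to translate to $P_1,Q_j$ and re-examine each quadrant. The key insight making everything work is that the non-strong-domination hypothesis is exactly what is needed to force the one quadrant ($P_1\cap Q_j$ in the co-partition case, $V\setminus(P_1\cup Q')$ in the partition case) to be nonempty, closing the crossing argument.
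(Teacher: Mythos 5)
Your proof is correct and follows essentially the same route as the paper: use cross-freeness to show that any part of $\tilde\Q$ meeting $\supp(\tilde\P)$ must be contained in $\supp(\tilde\P)$, then let domination and the disjointness of the parts of $\tilde\Q$ force the parts of $\tilde\P$ to tile it. The only cosmetic difference is that you split into the partition and co-partition cases and exhibit the crossing pair inside $\P\cup\Q$ explicitly, whereas the paper argues once, directly with the pair $Q$ and $\supp(\tilde\P)$ (using the root node and a second intersecting part of $\tilde\Q$ to supply the same two nonempty quadrants you obtain).
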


\begin{proof}
Assume $\Q$ dominates but not strongly dominates $\P$; let $Q\in\tilde\Q$ be a part with $Q\cap\supp(\tilde\P)\neq\emptyset$.
We will show that $\tilde\P$ contains a partition of $Q$.
Observe that $Q\cup \supp(\tilde\P)\neq V$ as neither of them contains the root $\r$.
 By the assumption, $\tilde\Q$ must also have another part, say $Q'$ with $Q'\cap\supp(\tilde\P)\neq\emptyset$; this shows $\supp(\tilde\P)\setminus Q\neq\emptyset$.
Since $\P\cup\Q$ is cross-free, $Q$ and $\supp(\tilde\P)$ must not cross.
Therefore we may conclude that $Q\subseteq\supp(\tilde\P)$.

Consider now a part $P\in\tilde\P$ with $P\cap Q\neq\emptyset$.
By domination, there exists a $Q'\in \tilde\Q$ with $P\subseteq Q'$; since $\tilde\Q$ is a sub-partition, we must have $Q'=Q$. Hence every part of $\tilde\P$ intersecting $Q$ is entirely contained inside $Q$, showing that $\tilde\P$ contains a partition of $Q$.
\end{proof}

\subsection{Proof of Theorem~\ref{thm:frac-value}}\label{sec:iterative}

Let us assume that for each edge $e\in E^*$,
$0<x^*_e<\frac{1}{\ratio}$. Let $\aP$  be the family of partitions and
co-partitions as given by Theorem~\ref{thm:charac}. We will derive a
contradiction to the fact that $|E^*|=|\supp(x^*)|= |\aP|$ via
a counting argument. We start by assigning three tokens to
each edge $e\in E^*$ and reassign them to each partition/co-partition
in $\aP$  such that we assign three tokens to each of them, and have
some extra tokens left, a contradiction.

Before giving the token argument, we begin with a few definitions.
A  partition or co-partition  $\P$
\emph{properly covers}  an edge $(u,v)$ if
$u$ and $v$ lie in different parts of the sub-partition $\tilde\P$.
 A  partition or co-partition  $\P$
\emph{semi-covers} an edge $(u,v)$
 if exactly one of the sets $\{u\}$ or $\{v\}$ is properly contained by $\P$. Equivalently, the edge $e=(u,v)$ is semi-covered if $e\in \delta(P_1)$.
 We let $I(\P)$ denote the set of edges that are properly covered by
 $\P$ and let $i(\P)\defeq |I(\P)|$.

Recall from
Lemma~\ref{lem:proper-contain} that for any set $S\subseteq V$ if there exists a partition or co-partition in $\PP$ properly containing $S$, then there
is a $\preceq$-minimal $\P\in \PP$ such that $\P$ properly contains
$S$. The tokens are initially assigned according to two rules.

\medskip
\noindent\textbf{Initial Assignment I.} Each edge $(u,v)$ distributes
its three tokens as follows. For each of the two endpoints of $e$, say $u$, we assign one token to
the $\preceq$-minimal  $\P\in\PP$
such that $\P$ properly contains $\{u\}$. We also assign one token to the $\preceq$-minimal $\P\in \PP$ such that $\P$ properly contains the set $\{u,v\}$.

\begin{lemma}
Let $\P$ be a leaf in the forest $F$. Then it receives at least
$7+i(\P)$ tokens from Initial Assignment I.
\end{lemma}
\begin{proof}
It is easy to see that $\P$ receives at least one token for each edge in $\chi(\P)$ and two tokens for each edge that is properly covered by $\P$ for a total of at least $|\chi(\P)|+i(\P)$. Since $x(\chi(\P))\geq 1$, and $x_e< \frac{1}{\ratio}$, we may conclude that $|\chi(\P)|\geq 7$, implying the claim.
\end{proof}

\noindent
\textbf{Initial Assignment II.} We collect one token from each leaf in $F$ and give one token to each branching node (a node with at least two children) in $F$.

This assignment is feasible since the number of branching nodes in a
forest is at most the number of leaves. At the end of this initial
assignment, each of the leaves has at least $6+i(\P)$ tokens and
branching nodes have at least one tokens from the leaves. We now
proceed by induction.

The following lemma
completes the proof of Theorem~\ref{thm:frac-value}: if applied
it to the root nodes of the forest $F$, we can see that every member
of $\PP$ is assigned 3 tokens and we have a positive amount of surplus
tokens left. This contradicts $|E^*|= |\aP|$.

\begin{lemma}\label{lem:token-induction}
Let $\Q$ be any node in $F$, and let $F'$ denote the subtree rooted at
$\Q$. The tokens assigned to the nodes of $F'$ by Initial Assignments I and II
can be reassigned in such a way that every node in $F'$  receives three tokens
 and $\Q$ receives at least $6+\min\{1,i(\Q)\}$ tokens.
\end{lemma}

\begin{proof}
The proof of the lemma is by induction on the height of the subtree. Clearly, the Initial Assignment II ensures that there are at least $6+i(\Q)$ tokens assigned to leaves and hence the base case of the induction holds.

Now consider an internal node $\Q$. First suppose that it is a
branching node and thus has at least two children. Applying induction
on the subtree rooted at each of the children, we obtain that every
child $\P$ can be assigned at least $6+\min\{1,i(\P)\}$ tokens, and every other node in the subtree obtains at least three tokens. We keep the same assignment for all the nodes in each of the subtrees. Each child gives three tokens to $\Q$ while keeping at least three tokens for itself. Thus $\Q$ receives at least $6$ tokens from its children. Moreover, it receives at least one token from the leaves in the Initial Assignment II. Thus it receives at least $7\geq 6+\min\{1,i(\Q)\}$ tokens completing the induction.

Now consider the case when $\Q$ has exactly one child, say $\P$. We first apply the inductive hypothesis on the subtree rooted at $\P$. Thus $\P$ can donate $3+\min\{1,i(\P)\}$ tokens to $\Q$, while maintaining that all nodes in the subtree of $\P$, including $\P$, receive at least three tokens. We now show that $\Q$ receives the remaining required tokens by Initial Assignment I. We consider two different cases, depending on whether $\tilde\Q$ strongly dominates $\tilde\P$ or not. Note that if one of $\P$ and $\Q$ is a partition and the other is a co-partition,  then $\tilde\Q$ must strongly dominate $\tilde\P$ by the definition of strongly cross-freeness. However, if they are both partitions or both co-partitions, only the weaker property of domination is assumed. Throughout the arguments, we shall use that since
$\P$ is the only child of $\Q$, every $\R\prec\Q$ must satisfy
$\supp(\tilde\R)\subseteq\supp(\tilde\P)$. Consequently, if an edge $(u,v)$ has an endpoint in $\supp(\tilde\Q)\setminus\supp(\tilde\P)$, then $\Q$ must receive the corresponding token in Initial Assignment I.

\begin{enumerate}
\item {\em $\tilde\Q$ does not strongly dominate $\tilde\P$}; recall the structure described in Lemma~\ref{lem:type-1}.

\begin{claim}\label{cl:type-1}
$\Q$ receives at least one token from every edge in
$\chi(\P)\Delta\chi(\Q)$ in Initial Assignment I.
\end{claim}
\begin{proof}
First consider any edge $(u,v)\in\chi(\Q)\setminus \chi(\P)$.
By Lemma~\ref{lem:type-1}, $\tilde\P$ contains a partition of every part of $\tilde\Q$ that intersects $\supp(\tilde\P)$. Therefore we must have $\{u,v\}\cap\supp(\tilde\P)=\emptyset$; since $(u,v)\in\chi(\Q)$, $\Q$ must properly contain at least one of  $\{u\}$ and $\{v\}$. In either case, $\Q$ is the $\preceq$-minimal set with this property and receives at least one token for each such edge by Assignment I.

Now consider an $(u,v)\in \chi(\P)\setminus \chi(\Q)$. It must be the case that $u$ and $v$ are in different parts of $\tilde\P$ and moreover, both $u$ and $v$ are contained in a single part of $\tilde\Q$. Thus $\Q$ is the $\preceq$-minimal set properly containing $\{u,v\}$, and therefore receives one token by Assignment I for this edge.
\end{proof}

We now show that these tokens are sufficient for $\Q$ using the following claim.
\begin{claim}\label{cl:delta}
$x^*(\chi(\P)\Delta\chi(\Q))$
is  a positive integer.
\end{claim}
\begin{proof}
Let us assume that both $\P$ and $\Q$ are partitions; the proof is the same for the co-partition case.
We can give an alternative decomposition of the 2-regular family $\P\dunion\Q$ into two  partitions as follows.
    Let the partition $\R$ consist of the minimal sets in $\P\cup \Q$ and let $\S$ be the partition consisting of the maximal sets in $\P\cup \Q$. Now $\P\dunion\Q=\R\dunion\S$, and therefore
Claim~\ref{cl:tight-fam} implies that both $\R$ and $\S$ are tight. Thus $x^*(\chi(\R))-x^*(\chi(\S))$ equals an integer. But $\chi(S)\subseteq \chi(R)$ and $\chi(R)\setminus \chi(S)=\chi(\P)\Delta\chi(\Q)$.
Finally, $x^*(\chi(\P)\Delta\chi(\Q))=0$   would contradict the linear independence assumption in Theorem~\ref{thm:charac}.
\end{proof}

Since $x^*_e<\frac1\ratio$ for every edge, we may conclude from the above Claim that $|\chi(\P)\Delta\chi(\Q)|\geq 7$ and therefore $\P$ receives at least $3+\min\{1,i(\P)\}+7> 6+\min\{1,i(\Q)\}$ tokens.

\item {\em $\tilde\Q$ strongly dominates $\tilde\P$}; assume that part $Q$ of $\tilde\Q$ contains $\supp(\tilde\P)$.

\begin{claim}
$\Q$ receives at least $|\chi(\Q)\Delta \chi(\P)|+i(\Q)$ tokens in Initial Assignment I. In case of  equality, either
$\chi(\P)\subseteq \chi(\Q)$ or
$i(\P)\ge 1$ must hold.
\end{claim}
\begin{proof}
For every edge $(u,v)\in\chi(\Q)\setminus \chi(\P)$ we must have $\{u,v\}\cap \supp(\tilde\P)=\emptyset$.
In Initial Assignment I, $\Q$ must receive  $|\{u,v\}\cap \supp(\tilde\Q))|$ tokens from the edge $(u,v)$; that is, it receives one if $\Q$ semi-covers $(u,v)$ and 2 if
$(u,v)\in I(\Q)$.  Altogether $\Q$ receives at least $|\chi(\Q)\setminus \chi(\P)|+|I(\Q)\cap(\chi(\Q)\setminus \chi(\P))|$ tokens from these edges. Meanwhile if there is an edge $(u,v)\in I(\Q)\cap \chi(\P)$, then one of its endpoints must be in a part of $\tilde\Q$ different from $Q$. Thus, $\Q$ is the $\preceq$-minimal (co-)partition properly  containing this endpoint and therefore receives one token for this edge.
In summary, $\Q$  receives at least $|\chi(\Q)\setminus \chi(\P)|+i(\Q)$ tokens from edges in $\chi(\Q)$.

Consider an edge $(u,v)\in \chi(\P)\setminus \chi(\Q)$. We must have  $\{u,v\}\subseteq Q$, and therefore $\Q$ is the $\preceq$-minimal (co-)partition properly containing $\{u,v\}$, and receives at least one token for any such edge.
Hence $\Q$ receives the claimed total token amount.

Let us now focus on the second part of the claim; assume $\chi(\P)\setminus \chi(\Q)\neq\emptyset$, and let $(u,v)\in \chi(\P)\setminus \chi(\Q)$.
If $(u,v)$ semi-covers $\P$, then one of the endpoints must lie in $Q\setminus \supp(\P)$ and hence $\Q$ receives an extra token corresponding to this endpoint.
Therefore in the case of equality, every $(u,v)$ must properly cover $\P$, that is, $i(\P)\ge 1$.
\end{proof}

Subtracting the constraint for $\P$ from that for $\Q$, we obtain that
$x^*(\chi(\Q))-x^*(\chi(\P))=t\in\ZZ$.
Observe that if $t\neq 0$, then $|\chi(\Q)\Delta \chi(\P)|\geq 7$ due to the assumption $x^*_e<\frac1\ratio$ for every edge. In this case, we are done by the above Claim.

The rest of the argument focuses on the case $t=0$. We must have $\chi(\P)\setminus \chi(\Q)$ and $\chi(\Q)\setminus \chi(\P)$  both non-empty due to the linear independence.
Together with the tokens $\Q$ can obtain from $\P$, it  receives a total of at least
$$3+\min\{1,i(\P)\}+ 2+i(\Q).$$
Further, equality may only hold if $i(\P)\geq 1$. Thus, whether $i(\P)=0$ or not, $\Q$ receives a total of at least
$6+\min\{1,i(\Q)\}$ tokens, completing the induction.
\end{enumerate}
\end{proof}
This completes the proof of Theorem~\ref{thm:frac-value}.

\section{Integrality Gap}\label{sec:mixed}

In this section we prove Theorem~\ref{thm:orientation}, giving an
integrality gap example for the
\emph{Augmenting a mixed graph with orientation constraints problem}. We can formulate the following natural linear
program.
This is identical to \eqref{lp:2} with the requirement function
$f(Z)=k-d^{in}_A(Z)$. This function $f$ is indeed crossing
$(G-)$supermodular; however, note that it might take negative values
as well.

\begin{equation}\tag{{\em
        LP3}} \label{lp:3}
\begin{aligned}
    \mbox{ minimize} \sum_{(u,v)\in E^*} c_{uv} x_{uv} & &\notag \\
    \mbox{ s.t.}~~
     y(\delta^{in}(Z))& \geq k-d^{in}_A(Z)  &  \;\; \forall\; \emptyset\neq Z\subsetneq V\notag\\
     y_{uv}+y_{vu} &= x_{uv}& \;\; \forall\; (u,v)\in E^* \\
      y_{uv}+y_{vu}&= 1 &\;\; \forall (u,v)\in E\notag\\
 \textbf{0}\le x&\le \textbf{1} &\notag\\
y&\ge \textbf{0}
\end{aligned}
\end{equation}

We present an
example for $k=2$ with integrality gap $\Omega(|V|)$. Our example can
 be easily extended for an arbitrary value $k>2$, see Remark~\ref{rem:big-k}.
The same construction will be used to show that \eqref{lp:2} cannot be used for iterative rounding even
for nonnegative demand functions, see Remark~\ref{rem:lp:2},

\begin{figure*}[htb]
\centering
\mbox{\subfigure{\includegraphics[width=4cm,height=7cm]{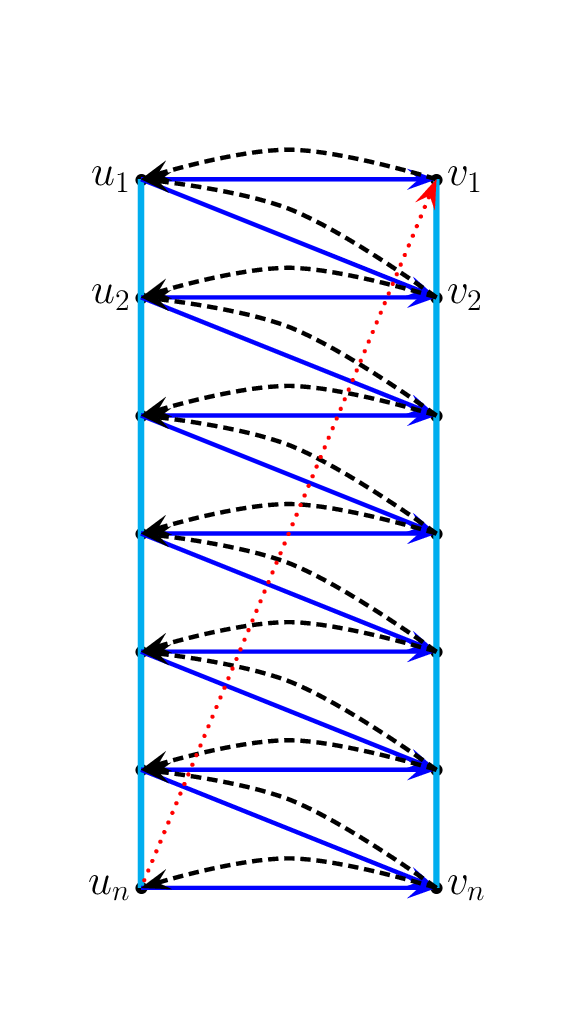}}\quad
\subfigure{\includegraphics[width=8cm,height=7cm]{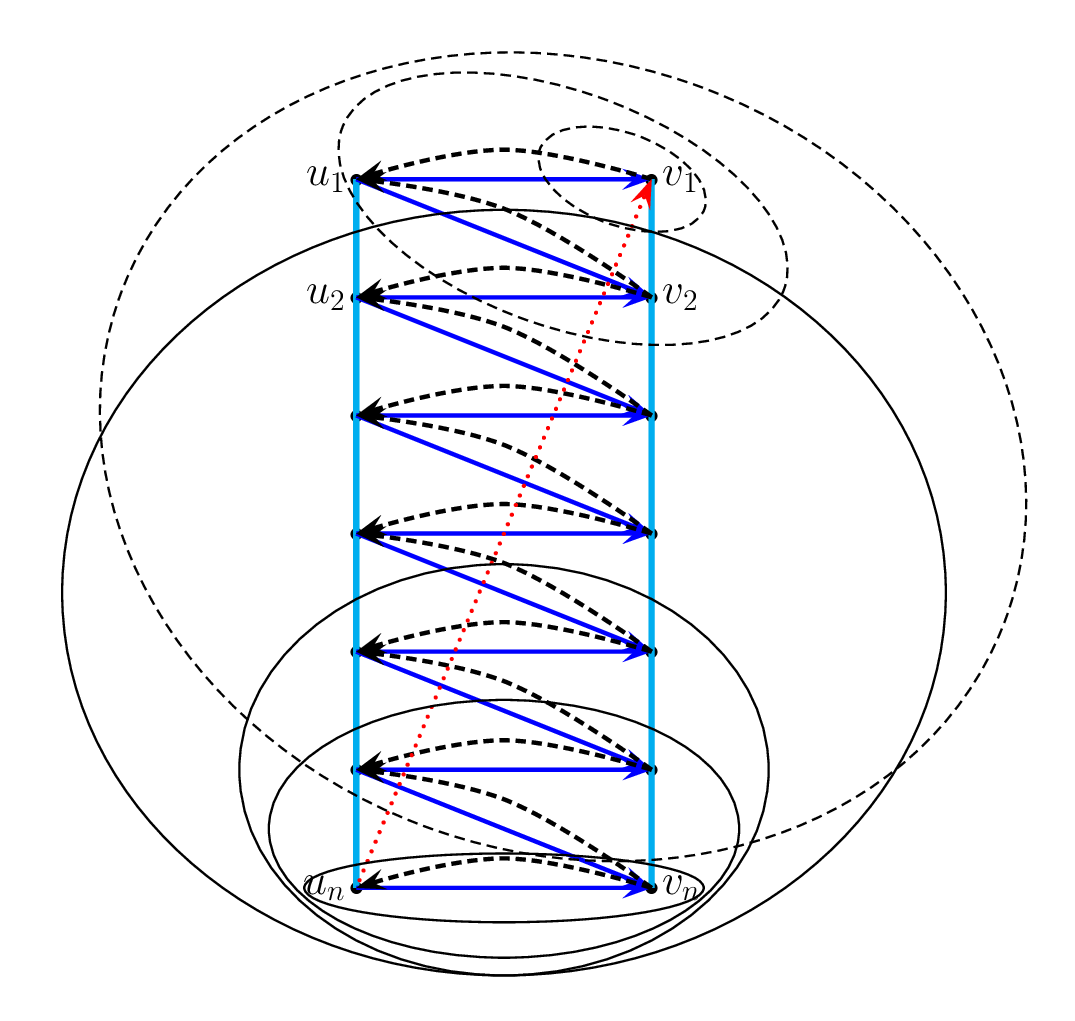} }}
\caption{Integrality gap example and the set of fundamental cuts. The special arc $e_r=(u_n,v_1)$ costs 1. The  arcs $A_G=\{(u_i,v_i):1\leq i  \leq n\} \cup \{(u_i,v_{i+1}):1\leq i
\leq n-1\}$ are the solid horizontal and diagonal arcs. The arcs $A_B$ consists of the dashed arcs and include two copies in the reverse direction for each arc in $A_G$. The set $E= \{\{u_i,u_{i+1}\}: 1\leq i\leq n-1\}\cup \{\{v_i,v_{i+1}\}: 1\leq
i\leq n-1\}$ include all the vertical arcs.  The $2n-1$ fundamental cuts in the second figure certify that there does not exist an orientation of the mixed graph without the special arc $e_r$ that is $2$-edge connected.} \label{fig:1}
\end{figure*}

\subsubsection*{Integrality Gap Instance.}
The graph (see Figure~\ref{fig:1}) is on $2n$ nodes:
$V=\{u_1,\ldots,u_n,v_1,\ldots,v_n\}$.
Let us first define the set of arcs $A=A_G\cup A_B$.
\[
A_G\defeq\{(u_i,v_i):1\leq i  \leq n\} \cup \{(u_i,v_{i+1}):1\leq i
\leq n-1\}.
\]
The second group, $A_B$ consists of two parallel copies of the reverse
of every edge
in $A_G$ (each curved arc represent two arcs).
The set $E$ of edges already available is
\[
E\defeq\{\{u_i,u_{i+1}\}: 1\leq i\leq n-1\}\cup \{\{v_i,v_{i+1}\}: 1\leq
i\leq n-1\}.
\]
The set $E^*$ consists of a single edge  $e_r=(u_n,v_1)$ with $c(e_r)=1$.
Let us first examine integer optimal solutions. We show that every
2-edge-connected oriented subgraph must use the arc $(u_n,v_1)$ and hence
must have cost at least 1.
\begin{lemma}
The optimal integral solution to \eqref{lp:3}
equals 1.
\end{lemma}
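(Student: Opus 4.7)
For the upper bound, my plan is to exhibit the explicit integer feasible solution $F = \{e_r\}$ together with the orientation $H$ in which every $\{u_i, u_{i+1}\}$ becomes $u_i \to u_{i+1}$, every $\{v_i, v_{i+1}\}$ becomes $v_i \to v_{i+1}$, and $e_r$ is oriented as $u_n \to v_1$. Verifying $d^{\mathrm{in}}_{A\cup H}(Z) \ge 2$ for every nonempty proper $Z \subsetneq V$ then confirms 2-edge-connectivity of $A \cup H$ and gives a feasible integer solution of cost $1$.

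For the lower bound, I assume for contradiction that $F = \emptyset$ and that some orientation $H$ of $E$ makes $A \cup H$ 2-edge-connected. The crucial preliminary step is to enumerate all cuts $Z$ with $d^{\mathrm{in}}_A(Z) \le 1$. A case analysis parametrized by the sets $\{i : u_i \in Z\}$ and $\{i : v_i \in Z\}$, using that every arc of $A_B$ has a $u$-head, shows the deficient cuts form precisely two families: the ``prefix'' cuts $U_k := \{u_1,\ldots,u_k,v_1,\ldots,v_{k+1}\}$ for $k = 0,\ldots,n-1$, whose unique entering arc is $(u_{k+1},v_{k+1}) \in A_G$; and the ``suffix'' cuts $M_a := \{u_a,\ldots,u_n,v_a,\ldots,v_n\}$ for $a = 2,\ldots,n$, whose unique entering arc is $(u_{a-1},v_a) \in A_G$.

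Encode the orientations of $\{u_i, u_{i+1}\}$ and $\{v_i, v_{i+1}\}$ by signs $a_i, b_i \in \{\pm 1\}$, with $+1$ indicating the forward direction. Since $F = \emptyset$, the only $H$-edges crossing $U_k$ are $\{u_k, u_{k+1}\}$ (if $k \ge 1$) and $\{v_{k+1}, v_{k+2}\}$ (if $k \le n-2$), and the only $H$-edges crossing $M_{k+1}$ are $\{u_k, u_{k+1}\}$ and $\{v_k, v_{k+1}\}$. Requiring $d^{\mathrm{in}}(Z) \ge 2$ at every deficient cut thus yields the logical constraints
\[
U_k:\ a_k = -1 \text{ or } b_{k+1} = -1, \qquad M_{k+1}:\ a_k = +1 \text{ or } b_k = +1,
\]
where absent variables are treated as false. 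Then $U_0$ forces $b_1 = -1$; $M_2$ thereupon forces $a_1 = +1$; $U_1$ forces $b_2 = -1$; $M_3$ forces $a_2 = +1$; and an immediate induction yields $a_k = +1$ for every $k \in \{1,\ldots,n-1\}$. But $U_{n-1}$ demands $a_{n-1} = -1$, a contradiction. Therefore any feasible integer solution must contain $e_r$, giving cost at least $1$ and matching the upper bound.

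The main obstacle is the enumeration of deficient cuts in the second paragraph: one must carefully track which $A_G$- and $A_B$-arcs can cross a cut, and rule out any exotic candidates with $d^{\mathrm{in}}_A \le 1$ beyond the two prefix/suffix families. Once that structural claim is in hand, the constraints propagate along the two chains via a brief telescoping induction.
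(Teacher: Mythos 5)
Your proof is correct, and the upper-bound half coincides with the paper's construction. The lower bound, however, is finished differently. The paper works with the same family of cuts you identify (your $U_k$ and $M_a$ are exactly its fundamental cuts $S_{k+1}$ and $T_{a-1}$ of the spanning tree $A_G$), but instead of propagating constraints it counts globally: the $2n-1$ fundamental cuts have total demand $4n-2$, the arcs of $A$ supply exactly $2n-1$ entering arcs to this family, and each of the $2n-2$ edges of $E$ enters exactly one fundamental cut under either orientation, so at most $4n-3$ demand can be met. Your argument replaces this counting with a telescoping chain of forced orientations ($b_1=-1$, $a_1=+1$, $b_2=-1$, \dots) that collides with the requirement of $U_{n-1}$. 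Both versions rest on the same structural fact, which you correctly flag as the crux and which the paper establishes in its feasibility check: since $A_B$ doubles the reversal of every tree arc, any set out of which a tree arc points receives two entering arcs, so a set with $d^{in}_A(Z)\le 1$ is crossed by exactly one arc of $A_G$, entering it, i.e.\ $Z$ is a fundamental cut with $d^{in}_A(Z)=1$. (Strictly speaking your lower bound only needs that the listed cuts are deficient, not that they exhaust all deficient cuts; completeness is what the feasibility verification requires.) The trade-off: the paper's counting is more robust --- it extends verbatim to $k>2$ as in Remark~\ref{rem:big-k}, since supply and demand over the cut family scale together --- while your propagation is more concrete, exhibiting the unique orientation that $2$-edge-connectivity would force and the precise cut where it fails.
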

\begin{proof}
We show that there
is no possible orientation of the edges of $E$ which offers a
feasible solution together with $A$, and thus one must pick the edge $e_r$.
We then show that there is a feasible solution after picking $e_r$.

Consider the directed tree induced by $A_G$. For every arc in $A_G$, we
consider the fundamental cut
entered by this arc. Namely, for every arc $(u_i,v_i)$ let
$(\bar S_i,S_i)$ be the cut with $S_i=\{u_1,\ldots,u_{i-1},
v_{1},\ldots,v_i\}$ for each $1\leq i\leq n$.
For every arc $(u_i,v_{i+1})$,  let $(\bar T_i,T_i)$
be the cut with $T_i=\{u_{i+1},\ldots,u_n,v_{i+1},\ldots,v_n\}$ for each $1\leq i
\leq n-1$.  Let $\F$ denote the family of these $2n-1$ cuts.
Since, the final solution is $2$-strongly connected, there
must be at least two arcs entering the sets $S_i$ and $T_i$ for each $i$. Thus the total demand of the cuts in $\F$ is
exactly $4n-2$.

The arcs in $A_B$ do not enter any of these cuts, and  each arc in $A_G$
enters exactly one of them. Hence
$A=A_G\cup A_B$ supplies exactly  $2n-1$ arcs covering $\F$.
 Moreover, each of the $2n-2$ edges in $E$ may enter exactly one cut in
 the family, whichever direction it is oriented in. For example, the
 edge $(u_i,u_{i+1})$ enters $T_i$ if oriented from $u_i$
 to $u_{i+1}$ and enters $S_{i}$ if oriented from $u_{i+1}$
 to $u_i$.
This shows that $A$ together with any orientation of $E$ contains
exactly $4n-3<4n-2$ arcs covering cuts in $\F$, and thus there
cannot be any feasible solution of cost 0.

We now construct a cost 1 solution.
Let us orient $e_r$ from $u_n$ to $v_1$.
Orient the edge $(u_i,u_{i+1})\in E$ from $u_i$ to $u_{i+1}$, and orient
the edge $(v_i,v_{i+1})\in E$ from $v_i$ to $v_{i+1}$ for each $1\leq i
\leq n-1$.
We now show that this is a 2-edge-connected graph.
Consider any set $\emptyset\neq Z\subsetneq V$. If an arc in $A_G$ leaves
$Z$, then there are two arcs
in reverse direction in $A_B$ entering $Z$. We are also done if there exists two arcs $(u,v),
(u',v')\in A_G$, both entering $Z$.
Thus the only cuts that need to
checked are the set $\F$ of fundamental cuts; note that there is
always an arc in $A_G$ entering these sets. We claim that at least
one of the oriented arcs in $E$ or $e_r$ must also enter it. This follows since $e_r$ enters ${S}_i$ for each $1\leq i \leq n$. The arc $(u_i,u_{i+1})$ enters ${T}_i$ for each $1\leq i \leq n-1$ completing the proof.
\end{proof}

\begin{lemma}
There is fractional solution to \eqref{lp:3}  of cost $\frac{1}{n}$.
\end{lemma}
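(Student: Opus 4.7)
The plan is to construct an explicit fractional solution with $x_{e_r} = \frac{1}{n}$ and carefully chosen orientations of the remaining edges. Letting $\gamma = y_{u_n v_1}$ and $\delta = y_{v_1 u_n}$, I would observe that $\gamma$ contributes to the incoming $y$-degree of \emph{every} $S_i$ (since $v_1 \in S_i$, $u_n \notin S_i$), and $\delta$ contributes to the incoming $y$-degree of every $T_i$. Since the budget is $\gamma + \delta = \tfrac{1}{n}$, which is small, one has to use essentially all of it on one side: I would set $\gamma = \tfrac{1}{n}$ and $\delta = 0$.

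With $\gamma$ fixed, I would choose the rail orientations to satisfy the $T_i$-constraints (which receive no help from $\gamma$) with equality, while linearly interpolating the $u$- and $v$-rail orientations so that the deficit on the $S_i$-side is exactly absorbed by $\gamma = \tfrac{1}{n}$. Concretely, for $i = 1, \dots, n-1$ I would set
\begin{align*}
y_{u_i u_{i+1}} &= \tfrac{n-i}{n}, & y_{u_{i+1} u_i} &= \tfrac{i}{n},\\
y_{v_i v_{i+1}} &= \tfrac{i}{n}, & y_{v_{i+1} v_i} &= \tfrac{n-i}{n}.
\end{align*}
This satisfies $y_{uv}+y_{vu}=1$ on each edge of $E$ and $y_{u_n v_1}+y_{v_1 u_n} = x_{e_r}$ on $e_r$, with all $y$ values in $[0,1]$, giving cost exactly $\tfrac{1}{n}$.

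For feasibility, I would exploit the same case distinction used in the integral lower bound argument: for any cut $\emptyset \neq Z \subsetneq V$, either some arc of $A_G$ leaves $Z$ (in which case two parallel copies of its reverse in $A_B$ enter, so $d^{in}_A(Z) \geq 2$), or at least two arcs of $A_G$ enter $Z$ (again $d^{in}_A(Z) \geq 2$), or $Z$ is one of the fundamental cuts $S_i$ or $T_i$ with $d^{in}_A(Z) = 1$. In the first two cases the cut constraint is automatically satisfied; only the $2n-1$ fundamental cuts require work.

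The main (but routine) step is then direct verification that all $2n-1$ constraints hold with equality: for $S_i$ with $2 \leq i \leq n-1$, $y_{u_i u_{i-1}} + y_{v_{i+1} v_i} + y_{u_n v_1} = \tfrac{i-1}{n} + \tfrac{n-i}{n} + \tfrac{1}{n} = 1$; for the boundary cases $S_1 = \{v_1\}$ and $S_n = V\setminus\{u_n\}$ only two of the three terms appear but the arithmetic again yields $1$; and for $T_i$ with $1 \leq i \leq n-1$, $y_{u_i u_{i+1}} + y_{v_i v_{i+1}} + y_{v_1 u_n} = \tfrac{n-i}{n} + \tfrac{i}{n} + 0 = 1$. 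I do not anticipate any nontrivial obstacle: once the construction is identified (by the above forcing argument), the verification is a handful of elementary equalities.
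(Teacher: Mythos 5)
Your construction is exactly the paper's: the same linear interpolation of the rail orientations ($y_{u_iu_{i+1}}=1-\tfrac{i}{n}$, $y_{v_iv_{i+1}}=\tfrac{i}{n}$), the same choice $y_{u_nv_1}=\tfrac1n$, $y_{v_1u_n}=0$, the same reduction of feasibility to the $2n-1$ fundamental cuts, and the same arithmetic verification. The proof is correct and follows the paper's argument essentially verbatim (your ``forcing'' motivation for the choice of $\gamma,\delta$ is extra exposition, not a different route).
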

\begin{proof}
We give the following fractional solution.
Let
\[
\begin{array}{lll}
 y_{u_iu_{i+1}}=1-\frac{i}{n},&
y_{u_{i+1}u_{i}}=\frac{i}{n},& \\
 y_{v_iv_{i+1}}=\frac{i}{n},&
y_{v_{i+1}v_{i}}=1-\frac{i}{n},& \quad \forall 1\leq i \leq n-1,\\
x_{u_nv_1}=y_{u_nv_1}=\frac{1}{n},& y_{v_1u_n}=0.& .
\end{array}
\]
The cost is
 exactly $\frac{1}{n}$ due to the edge $e_r$.
We now check feasibility. As in the analysis of the integral solution,
it is sufficient to show that the fundamental cuts in $Z\in \cal F$ have a
total of at least 1 fractional edges entering them (note that the
demand is
$2-d^{in}_A(Z)=1$ for all $Z\in\F$).
 First consider the set $T_i$. The arcs entering it are exactly
$(u_i,u_{i+1})$ and $(v_i,v_{i+1})$. Thus the total fractional value entering this cut is exactly
$$y_{u_iu_{i+1}}+y_{v_iv_{i+1}}=1-\frac{i}{n}+\frac{i}{n}=1$$
as required.
Now consider the cut $S_i$. The arcs entering it are  $(u_{i},u_{i-1})$, $(v_{i+1},v_i)$ and $(u_n,v_1)$. The total fractional value entering this cut is exactly
$$y_{u_{i}u_{i-1}}+y_{v_{i+1}v_i}+y_{u_nv_1}=\frac{i-1}{n}+1-\frac{i}{n}+\frac{1}{n}=1$$
as required. Hence, the optimal value of LP has objective at most $\frac{1}{n}$.
\end{proof}

\begin{remark}\label{rem:big-k} To extend the integrality for any
  $k\geq 2$, we can add $k-2$ parallel directed Hamiltonian cycles
  $(u_1,\ldots,u_n,v_n,\ldots,v_1,u_1)$.
 A simple check shows that the total supply and
  demand of the fundamental cuts increases by exactly $(k-2)(2n-1)$,
  thus giving an integrality gap example for connectivity requirement
  of $k$.
\end{remark}

\begin{remark}\label{rem:lp:2}
It can also be shown that the fractional solution is an
  extreme point of \eqref{lp:3}. This can be used to show that
 \eqref{lp:2} for the Minimum Cost $f$-Orientable Subgraph Problem is not amenable for iterative rounding.
Indeed, in place of $E$ in \eqref{lp:2} we take the union of $E$ and the underlying
undirected edge set of $A$, and $E^*=\{e_r\}$; we set $x_{uv}=y_{uv}=1$
on the arcs in $A$. This then gives an extreme point of \eqref{lp:2}
with the single edge in $E^*$ having $x_{e_r}=\frac 1n$.
\end{remark}
\section{Discussion}\label{sec:discussion}

In this paper, we investigated two seemingly similar problem
settings.
 Our main result gave a 6-approximation for the {\em Minimum-Cost $f$-Orientable Subgraph Problem}, where $f$ is a nonnegative
valued crossing $G$-supermodular function, and $G$ denotes the
subgraph available for free. This includes the requirement of
$(k,\ell)$-edge-connectivity.
In the second setting, {\em Augmenting a Mixed Graph with Orientation Constraints} we aimed for the simpler requirement of global
$k$-connectivity, however, the input is a mixed graph. Here we proved that the integrality gap is
$\Omega(|V|)$ already for $k=2$. In what follows, we try to explain why the linear programming approach works for the first problem while
it fails for the second problem. Moreover, why does the integrality gap example exist for the second problem when $k=2$ and not when $k=1$?

The difference between these problems is explained best when looked at the question of feasibility, which is just a plain orientation problem.
Theorem~\ref{thm:orient} gives a necessary and sufficient
condition when an undirected graph $G$ is $f$-orientable for a
nonnegative crossing $G$-supermodular function.
The characterization includes a condition on every partition and
co-partition of $V$.
 The question whether there exists a
$k$-edge-connected orientation of mixed graph $G=(V,A\cup E)$ has been studied as the problem of
{\em orientation of  mixed graphs}. %
This problem was solved by Frank~\cite{frank96} (see also
\cite[Chapter 16]{frankbook}) in the abstract
framework of orientations covering crossing $G$-supermodular demand functions
(without assuming nonnegativity)\footnote{Negative values of the
  demand function may seem meaningless for the first sight; indeed, we would
  get an equivalent requirement if replacing $f(S)$ by
  $f^+(S)\defeq \max\{f(S),0\}$. However notice that $f^+(S)$ may not
  be crossing supermodular; hence the negative values are needed to
  guarantee the supermodularity property.}. However, the characterization turns
out to be substantially more difficult than in
Theorem~\ref{thm:orient} even for the special case of
$k$-connectivity. The partition and co-partition constraints do not
suffice, but the more general structure of {\em tree-compositions} is
required. We omit the definition here but remark that the fundamental
cuts in the construction in Section~\ref{sec:mixed} form a
tree-composition and the presence of this non-trivial family is what
makes the integrality gap example work.

However, the case $k=1$ is much simpler than $k\ge 2$: Boesch and Tindell
\cite{Boesch80} showed that a mixed graph has a strongly connected
orientation if and only if the underlying undirected graph is
2-edge-connected, and there are no cuts containing only directed arcs
in the same direction. Thus the much simpler family of cut constraints suffices and the tree-composition constraints are not necessary.
This is in accordance with the result that for $k=1$, Khanna, Naor, and
Shepherd \cite{Khanna05} gave a 4-approximation
algorithm, whereas we show a large integrality gap already for $k\ge 2$.

For nonnegative demand functions, we have seen that whereas \eqref{lp:2} is not suitable for iterative rounding, the method can be used for
its projection \eqref{lp} onto  the $x$-space. One might similarly hope that in the mixed setting it is possible to project \eqref{lp:3} to the $x$-space using the characterization in \cite{frank96}, with constraints corresponding to the intricate tree-composition structures instead of partitions and co-partitions only.
Nevertheless, our construction in Section~\ref{sec:mixed} shows that this is not possible, as the $\Omega(|V|)$ integrality gap would also be valid for the projection of \eqref{lp:3}, and therefore iterative rounding, or any other rounding, cannot give a constant approximation. Hence the difficulties arising in the mixed setting are more severe,
where the current approach does not seem to succeed. A natural open question is to obtain a hardness of approximation result matching the integrality gap given in this paper.

We also remark that the results of Frank and Kir\'aly
\cite{frankkiraly} for the minimum cardinality setting are of somewhat
similar flavor. They are able to find the exact optimal solution for
the problem of adding a minimum number of new edges to a graph $G$ so that
it has an orientation covering a nonnegative valued crossing
$G$-supermodular demand function. However, if the demand function can
also take negative values, they only give a characterization of
optimal solutions for the
degree-prescribed  variant of the problem, and the minimum cardinality
setting is left open.

\section*{Acknowledgement}

Mohit Singh would like to thank Seffi Naor and Bruce Shepherd for numerous discussions.

\newpage

\bibliographystyle{abbrv}
\bibliography{conn-orient}

\end{document}